\newtheorem{theorem}{Theorem}
\newtheorem{lemma}[theorem]{Lemma}
\newtheorem{definition}[theorem]{Definition}
\newtheorem{corollary}[theorem]{Corollary}
\begin{document}
\title{Universality of the Route to Chaos -Exact Analysis-}%
\author{Ken-ichi Okubo}%
\email{okubo.kenichi.65z@st.kyoto-u.ac.jp}
\affiliation{Department of Applied Mathematics and Physics, 
	Graduate School of Informatics, Kyoto University} %
\author{Ken Umeno} %
\email{umeno.ken.8z@kyoto-u.ac.jp}
\affiliation{Department of Applied Mathematics and Physics, 
	Graduate School of Informatics, Kyoto University} %
\date{\today}%
\begin{abstract}
	The universality of the route to chaos is analytically proven for countably infinite number of maps by
	proposing the Super Generalized Boole (SGB) transformations.
	As one of the route to chaos, intermittency route was studied by Pomeau and Manneville numerically. 
	They conjectured the universality in \textit{Type 1} intermittency, that the critical exponent of the Lyapunov exponent is $1/2$ in \textit{Type 1} intermittency.
	In order to prove their conjecture, we showed that for certain parameter ranges, the SGB transformations
	are \textit{exact} and preserve the Cauchy distribution. Using the property of exactness, we proved that
	the critical exponent is $1/2$ for countably infinite number of maps where \textit{Type 1} intermittency occurs.
	
	\pacs{05.45.-a, 05.70.Jk} 
\end{abstract}
\maketitle

\if0
Critical behavior, drastic change of physical quantities is one of the successes in physics with respect to
consistent between theories and experiments and it has attracted lots of attention.
It occurs in theoretical model such as Ising model, $n$-vector model and XY model
and in experimental physics. Behind the phenomena, there are universality classes which relate one critical behavior to the others.
\fi
\paragraph{Universality in chaos} 
Route from stable states to chaotic (intermittent) states
has caught much attention in broad fields in physics. 
This issue treats fundamental change of systems from stable state to unstable state
and it is an essential theme to analyze the stability of physical systems. 
Route to chaos is also studied theoretically and experimentally such as Hamiltonian systems \cite{Hioe}, 
map systems \cite{Otto,Manneville,Huberman,Pomeau,Milosavljevic,Lamba}, coupled oscillators \cite{Liu},
Belousov-Zhabotinskii reaction \cite{Swinney}, Rayleigh-Brnard convection \cite{Swinney}, Couette Taylor flow \cite{Swinney},
noise induced system \cite{Crutchfield}, thermoacoustic system \cite{Kabiraj} and optomechanics \cite{Bakemeier,He,Coillet}.
There is the theoretical classification of routes to chaos such as intermittency route, 
period doubling route, frequency locking route, etc \cite{Swinney}.
Frequently these researches have been motivated to discover the universality at the onset of chaos
with respect to the critical exponent of the Lyapunov exponent, which is an indicator of chaos.
The universality of the critical exponents in each route to chaos has been studied extensively by numerical simulations.
\if0
in chaos has been studied in the context of
how the stable dynamical systems become unstable.
In chaos theory, the Lyapunov exponent $\lambda$ is used to indicate chaos, that is 
by calculating $\lambda$ for a orbit, if $\lambda>0$, then the orbit is determined to be chaotic
and if $\lambda \leq 0$, then no-chaotic.
Then how the Lyapunov exponent behaves near a critical point $a_c$ at which 
$\lambda$ grows from negative (zero) to positive by changing a parameter $a$ as $\lambda \sim b\left|a- a_c\right|^\nu$?
Is there any universality in terms of $\nu$? These questions are related to how chaos appears.
To answer these questions the critical exponent $\nu$ of the Lyapunov exponent has been studied
mainly numerically \cite{Manneville, Pomeau, Huberman}.
\fi
For period doubling route, Huberman and Rudnick \cite{Huberman} estimated numerically the critical exponent $\nu$ 
as $\nu = \frac{\log 2}{\log \delta}$ where $\delta$ represents the Feigenbaum constant.
For intermittency route treated in this Letter, Pomeau and Manneville \cite{Manneville, Pomeau} classified intermittency into three types and 
conjectured the universality of $\nu$ for each intermittent type. 
In particular, in \textit{Type 1} intermittency, they conjectured the universality that $\nu = \frac{1}{2}$, by the numerical simulations.
After the work by Pomeau and Manneville, the critical exponent
has been researched in various field with relations to intermittency such as Billiard system \cite{Benettin,Muller}, 
electronic circuit \cite{Ono}, plasma physics \cite{Feng} and intermittent map \cite{Cosenza}. 
Those studies by numerical simulations suggest that their conjecture $\nu=\frac{1}{2}$ would be right.

However, in these researches, the critical exponent $\nu$ is estimated by numerically or
the analytical formulae of the Lyapunov exponent $\lambda$ was not obtained without any assumption.

\if0
the scaling behavior of the Lyapunov exponent is estimated by numerically 
in order to clarify the onset of chaos \cite{Manneville, Pomeau, Huberman}.

In the case of intermittency, Pomeau and Manneville classified intermittency into three types and numerically estimated
the critical exponents of the Lyapunov exponent for each type. The study about critical exponent of the Lyapunov exponent 
is

Intermittency is a phenomena in which there occurs ``turbulent" bursts in the intervals of ``laminar" phase \cite{Manneville}.
Intermittency was observed in nonlinear dynamics system \cite{Otto,Manneville}
earthquake \cite{Crisanti}, fluid mechanics \cite{Douady}.
In 1980, Pomeau and Manneville \cite{Pomeau} classified intermittency into three types using Floquet multiplier.
Using numerical calculation, they estimated the critical exponent of the Lyapunov exponent for each type.
In the same year, Hubermann \cite{Huberman} also numerically estimated the critical exponent for the logistic map using numerical simulation.
In these days, according to , the critical exponent for \textit{Type-3} intermittency is estimated by
numerical simulation.
\fi

On the other hand, the present authors \cite{Okubo} led the analytical formula of the Lyapunov exponent $\lambda$
as an explicit  function in terms of the bifurcation parameter by showing the mixing property for the Generalized Boole (GB) transformation. 
They proved in the GB transformation that 
both \textit{Type 1} and \textit{Type 3} intermittency occur
and that the conjecture by Pomeau and Manneville is correct.

In this Letter, it is analytically proved that for countably infinite number of maps, it holds that
\begin{equation}
\lambda \sim b\left|\alpha-\alpha_c\right|^\nu,~\nu = \frac{1}{2}, b>0,\label{Universality}
\end{equation} 
when \textit{Type 1} intermittency occurs where $\alpha$ and $\alpha_c$ represent a bifurcation parameter and the critical point, respectively.
In order to prove this, we propose more generalized maps,
the Super Generalized Boole (SGB) transformations and show that there are parameter ranges in which
the SGB transformations are \textit{exact} (stronger condition than ergodicity). 
That means one obtains countably infinite number of exact (ergodic) maps. 
Using this result, one can obtain explicitly the analytical formulae of the Lyapunov exponents 
and critical exponents.

We define two-parameterized one-dimensional maps, 
the Super Generalized Boole Transformations (SGB) $S_{K, \alpha} : \mathbb{R}\backslash B \to \mathbb{R}\backslash B$ as follows.
\begin{eqnarray}
x_{n+1} = S_{K, \alpha} (x_n) \overset{\mathrm{def}}{=} \alpha K F_K(x_n), \label{Definition: SGB}
\end{eqnarray}
where $\alpha>0$, $K \in \mathbb{N}\backslash \{1\}$, the set $B$ is a set on $\mathbb{R}$ such that for any point $x$ on $B$, there is an integer $n$ where
$S_{K, \alpha}^n x$ reaches a singular point, and the function $F_{K}$ corresponds to $K$-angle formula of cot function defined 
in Supplemental material.
The GB transformation $T_{\alpha, \alpha}$ in \cite{Okubo} corresponds to the map $S_{2, \alpha}$.
Figure \ref{Fig: Form of SGB} shows the return maps of $S_{3, \frac{1}{3}}, S_{4, \frac{1}{4}}$ and $S_{5, \frac{1}{5}}$.
\begin{figure}[!h]
	\centering
	\includegraphics[width=.8\columnwidth]{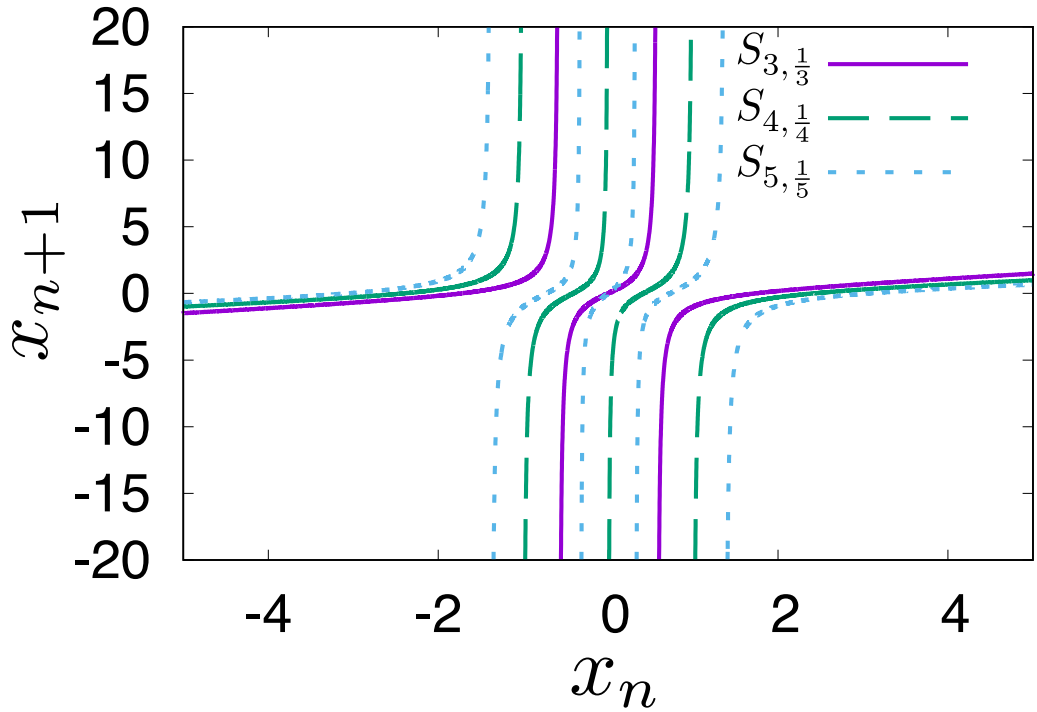}
	\caption{{\footnotesize The return maps of $S_{3, \frac{1}{3}}, S_{4, \frac{1}{4}}$ and $S_{5, \frac{1}{5}}$.
			The solid line, broken line, dotted line correspond to the form of $S_{3, \frac{1}{3}}, S_{4, \frac{1}{4}}$ and $S_{5, \frac{1}{5}}$, respectively.
			The explicit forms of these three maps are in Supplemental material.}}
	\label{Fig: Form of SGB}
\end{figure}

\paragraph{Invariant density} In this paragraph, it is proven that the SGB transformations preserve the Cauchy distribution for certain condition.
\if0
\noindent At first, consider the stability at fixed points.
When the number $K$ is an odd number, that is $K = 2n+1, n \in \mathbb{N}$, then the original point is the fixed point.
At the original point, it holds that 
\begin{eqnarray}
\left|\frac{d}{dx} S_{2n+1,\alpha}(x)\Big |_{x=0}\right| =\alpha(2n+1)^2.
\end{eqnarray}
Then, the condition that the original point is an attractive fixed point is denoted as
\begin{eqnarray}
0<\alpha < \frac{1}{(2n+1)^2}. \label{attractive}
\end{eqnarray}
Then, if the condition \eqref{attractive} is satisfied, the orbits are attracted into the origin.
Thus, for $K=2i+1$, the necessary condition that the dynamical system has an invariant measure
which is absolutely continuous with respect to the Lebesgue measure 
is $\frac{1}{(2n+1)^2}< \alpha <1$.
\fi
According to \cite{Umeno98,Umeno16}, the map $S_{K, \alpha}$ is $K$ to one map as follows.
$y = K\alpha \cot K\theta = K\alpha F_K(x_j),$ $x_j = \cot\left(\theta+j\frac{\pi}{K}\right), j=1,2,\cdots,K$.
If variables $\{x_j\}$ obey the Cauchy distribution $f_\gamma(x) = \frac{1}{\pi}\frac{\gamma}{x^2 + \gamma^2}$ whose scale parameter is $\gamma$, then
according to \cite{Umeno16}, the variable $y$ obeys the density function 
$p(y) = \frac{1}{\pi} \frac{\alpha K G_K(\gamma)}{y^2+ \alpha^2K^2G_K^2(\gamma)}$, 
where the function $G_K(x)$ corresponds to $K$-angle formula of coth function defined in Supplemental material.
Then the  scale parameter $\gamma$ is transformed in one iteration as
$\gamma \longmapsto\alpha K G_K(\gamma)$. 
Now, for each $K$, let us obtain the fixed point $0<\gamma_{K, \alpha}<\infty$ which satisfies the relation
\begin{eqnarray}
\gamma_{K, \alpha} = \alpha K G_K(\gamma_{K, \alpha}) \label{invariant density}
\end{eqnarray}
and clarify the condition of $\alpha$ that there exists a solution of \eqref{invariant density}.
The Cauchy distribution whose scale parameter is a solution of \eqref{invariant density}, which corresponds to the invariant density.
In order to approach this problem, we define the Condition A as follows.
\begin{definition}
	Condition A is referred to as
	\begin{equation}
	\left\lbrace
	\begin{array}{llllll}
	0<\alpha<1 & \mbox{in the case of}& K=2N, \\
	\frac{1}{K^2}< \alpha <1 &\mbox{in the case of}& K=2N+1,
	\end{array}
	\right.
	\end{equation}
	where $N \in \mathbb{N}$.
\end{definition}
Then the following theorem holds. 
\setcounter{theorem}{0}
\renewcommand{\thetheorem}{\Alph{theorem}}
\begin{theorem}\label{Theorem: unique solution}
	When the Condition A is satisfied, the SGB transformations $\{S_{K, \alpha}\}$ preserve the Cauchy distribution and 
	the scale parameter can be chosen uniquely.
\end{theorem}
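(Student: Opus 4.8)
The plan is to reduce the theorem to an existence-and-uniqueness statement for the fixed-point equation \eqref{invariant density}. Since the cited scale-parameter transformation already guarantees that a Cauchy distribution is mapped to a Cauchy distribution, ``preserving the Cauchy distribution with a uniquely determined scale parameter'' is equivalent to showing that $\gamma=\alpha K G_K(\gamma)$ has exactly one solution in $(0,\infty)$. I would first record the closed form coming from the $K$-angle coth formula,
\begin{equation}
G_K(\gamma)=\frac{(\gamma+1)^K+(\gamma-1)^K}{(\gamma+1)^K-(\gamma-1)^K},
\end{equation}
which is smooth and positive for all $\gamma>0$ (the denominator never vanishes since $\gamma+1>|\gamma-1|$). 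Dividing \eqref{invariant density} by $\gamma$, the equation becomes $R(\gamma)=\tfrac{1}{\alpha K}$, where $R(\gamma):=G_K(\gamma)/\gamma$. Thus the theorem follows once I show that $R$ is a strictly monotone bijection of $(0,\infty)$ onto an interval that contains $\tfrac{1}{\alpha K}$ exactly when Condition A holds; existence of the fixed point gives invariance of the Cauchy law and strict monotonicity gives uniqueness of its scale parameter.

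Second, I would pin down the range of $R$ from its boundary limits. A short expansion of $G_K$ near $\gamma=0$ and $\gamma=\infty$ gives $R(\gamma)\to 1/K$ as $\gamma\to\infty$ for every $K$, while as $\gamma\to0^+$ one finds $R(\gamma)\to\infty$ for even $K$ and $R(\gamma)\to K$ for odd $K$ (the odd case is precisely where the origin turns into a fixed point, which is why $\alpha>1/K^2$ is required). Together with strict monotonicity this makes $R$ a decreasing bijection onto $(1/K,\infty)$ for even $K$ and onto $(1/K,K)$ for odd $K$. The requirement that $\tfrac{1}{\alpha K}$ lie in these ranges reads $\alpha<1$ for even $K$ and $1/K^2<\alpha<1$ for odd $K$, which is exactly Condition A, so existence and uniqueness are obtained simultaneously.

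The main obstacle is proving that $R$ is strictly decreasing on $(0,\infty)$. Rather than differentiating the rational expression directly, I would substitute $\gamma=\coth u$ on $(1,\infty)$ and $\gamma=\tanh u$ on $(0,1)$, under which the multiple-angle structure collapses $G_K$ into a single hyperbolic function: $G_K(\coth u)=\coth(Ku)$, and $G_K(\tanh u)=\tanh(Ku)$ or $\coth(Ku)$ according as $K$ is odd or even. The elasticity of $R$ then simplifies dramatically; in the representative region $\gamma>1$ one obtains
\begin{equation}
\frac{d\log G_K}{d\log\gamma}=\frac{K\sinh(2u)}{\sinh(2Ku)},
\end{equation}
so that $R'<0$ is equivalent to this quantity being below $1$, i.e. to the elementary inequality $\sinh(2Ku)>K\sinh(2u)$ for $u>0$ and $K\ge 2$. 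I would settle this inequality once and for all by noting that $g(x):=\sinh(Kx)-K\sinh(x)$ vanishes at $x=0$ and has derivative $K\bigl(\cosh(Kx)-\cosh x\bigr)>0$ for $x>0$. The identical computation covers the two $\gamma\in(0,1)$ branches (the even case in fact yields a negative elasticity), and continuity of $R$ at $\gamma=1$, where $R(1)=1$, stitches the branches into one globally strictly decreasing function. With monotonicity established, the range analysis of the previous paragraph closes the proof.
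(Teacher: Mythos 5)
Your proposal is correct, but it takes a genuinely different route from the paper's. The paper fixes $\alpha$ and runs a case analysis: it substitutes $\gamma=\coth y$ (for $\gamma>1$, i.e.\ $\frac{1}{K}<\alpha<1$) or $\gamma=\tanh y$ (for $0<\gamma<1$), splits further by the parity of $K$, and for each case studies an $\alpha$-dependent difference function such as $f(y)=\coth y-\alpha K\coth(Ky)$ or $h_{2N+1}(y)=\tanh y-\alpha(2N+1)\tanh\{(2N+1)y\}$, locating a unique zero from boundary values and monotonicity. The delicate case there is odd $K$ with $\frac{1}{K^2}<\alpha<\frac{1}{K}$, where $h_{2N+1}$ is \emph{not} monotone ($h_{2N+1}(0)=0$, $h_{2N+1}'(0)<0$, $h_{2N+1}(\infty)>0$) and the paper must show its derivative changes sign exactly once, using the monotonicity of $\sinh y/\sinh\{(2N+1)y\}$. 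You instead separate the parameter from the variable: writing the fixed-point equation as $R(\gamma)=G_K(\gamma)/\gamma=\frac{1}{\alpha K}$, you prove once and for all that $R$ is a strictly decreasing continuous bijection of $(0,\infty)$ onto $(1/K,\infty)$ (even $K$) or $(1/K,K)$ (odd $K$), via the elasticity identity $d\log G_K/d\log\gamma=K\sinh(2u)/\sinh(2Ku)$ under the same hyperbolic substitutions and the elementary inequality $\sinh(Kx)>K\sinh x$. This buys two things: the awkward non-monotone case disappears entirely (your test function is $\alpha$-independent and globally monotone), and the range computation delivers existence \emph{if and only if} Condition A holds, so the sharpness of the parameter window and the uniqueness of the scale parameter come out of a single argument rather than separate existence and non-existence lemmas. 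The paper's case-by-case structure, on the other hand, produces along the way the location of the solution ($\gamma\gtrless 1$ according to $\alpha\gtrless\frac{1}{K}$), which it reuses later in the scaling analysis; if you adopt your streamlined proof you would want to note that $R(1)=1$ gives the same dichotomy immediately.
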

\setcounter{theorem}{1}
\renewcommand{\thetheorem}{\arabic{theorem}}
The proof of Theorem \ref{Theorem: unique solution} is given in Supplemental material.  
Although it has been proven that the map $S_{K, \alpha}$ preserves the Cauchy distribution and its scale parameter $\gamma_{K,\alpha}$ can be determined uniquely
when the Condition A is satisfied, 
it is not straightforward to obtain the explicit form of fixed point $\gamma_{K, \alpha}$ for arbitrary $K$,
since we have to solve the $K$th-degree equations.
From Theorem \ref*{Theorem: unique solution}, the condition that there exists only 
one solution of \eqref{invariant density} which satisfied $0<\gamma_{K,\alpha}<\infty$
is nothing but the Condition A.

\if0
In the case of $K=3$, the fixed point $\gamma_{3, \alpha}$ is uniquely determined as

when the Condition A is satisfied
and the fixed points $x_{3*}$ such that $x_{3*}=S_{3,\alpha}(x_{3*})$ are denoted as
\begin{equation}
x_{3*} = \left\lbrace
\begin{array}{cl}
0, & 0<\alpha\leq 1,\\
\pm \sqrt{\frac{1-9\alpha}{3(1-\alpha)}}, & 0 < \alpha < \frac{1}{9}.
\end{array}
\right.
\end{equation}
\textcolor{red}{
	For $\frac{1}{9} < \alpha <1$, the map $S_{3,\alpha}$ has only one fixed point $x_{3*}=0$ 
	and it is not attractive. The derivative $S_{3, \frac{1}{9}}'$ with respect to $x$ at $x=0$ and $S_{3, 1}'$ at $x= \pm\infty$ are
	\begin{equation}
	\begin{array}{lll}
	S_{3, \frac{1}{9}}'(0) &=& 1,\\
	S_{3, 1}'(\pm\infty) &=& 1.
	\end{array}
	\end{equation}
	Thus, the Floquet multipliers at $\alpha= \frac{1}{9}$ and $\alpha =1$ are unity.
}

\noindent In the case of $K=5$, for $\frac{1}{25}< \alpha <1$ the fixed point $0<\gamma_{5, \alpha}<\infty$ is uniquely determined as
\begin{eqnarray}
\gamma_{5, \alpha} \overset{\mathrm{def}}{=}  \sqrt{\frac{-5(1-5\alpha)+\sqrt{20(25\alpha^2-6\alpha+1)}}{5(1-\alpha)}},
\end{eqnarray}
and fixed points of $S_{5, \alpha}$ are
\begin{equation}
x_* = \left\lbrace
\begin{array}{ll}
0, & 0< \alpha \leq 1,\\
\pm \sqrt{\frac{5}{3}}, & \alpha = \frac{1}{25},\\
\pm \sqrt{\frac{5(1-5\alpha)+ 2\sqrt{5(25\alpha^2-6\alpha +1)}}{5(1-\alpha)}}, & \frac{1}{25} < \alpha <1,\\
\pm \sqrt{\frac{3}{5}}, & \alpha =1.
\end{array}
\right.
\end{equation}
\textcolor{red}{
	At fixed points $x_* = 0, \pm \sqrt{\frac{5}{3}}$ and $\pm \sqrt{\frac{3}{5}}$ , the derivatives 
	$S_{5, \alpha}' (x) = \frac{25\alpha(1+x^2)^4}{(5x^4-10x^2+1)^2}$ are
	\begin{equation}
	\begin{array}{lcl}
	S_{5, \frac{1}{25}}'\left(0\right) &=& 1,\\
	S_{5, 1}' (0) &=& 25,\\
	S_{5, \frac{1}{25}}'\left(\pm \sqrt{\frac{5}{3}}\right) &=& 16,\\
	S_{5, 1}'\left(\pm \sqrt{\frac{3}{5}}\right) &=& 16,\\
	S_{5, 1}'\left(\pm \infty\right) &=& 1,
	\end{array}  \label{derivative K=5}
	\end{equation} 
	Thus, the Floquet multipliers at $\alpha = \frac{1}{25}$ and $\alpha=1$ are unity.
}

\noindent In the case of $K=4$, the fixed point $0<\gamma_{4, \alpha}<\infty$ is uniquely determined for $0<\alpha<1$ as
\begin{eqnarray}
\gamma_{4, \alpha}  \overset{\mathrm{def}}{=} \sqrt{\frac{6\alpha-1+\sqrt{32\alpha^2-8\alpha+1}}{2(1-\alpha)}},
\end{eqnarray}
and fixed points of $S_{4, \alpha}$ are
\begin{equation}
x_{4*} = \left\lbrace
\begin{array}{cl}
0, & \alpha = 0,\\
\pm \sqrt{\frac{1-6\alpha+\sqrt{40\alpha^2-16\alpha +1}}{1-\alpha}}, &0<\alpha<1,\\
\pm \frac{1}{\sqrt{5}}, & \alpha = 1.
\end{array}
\right.
\end{equation}
\textcolor{red}{
	In order to obtain the Floquet multiplier at $\alpha =0$, apply scale transformation such that $x = \sqrt{\alpha} y$. Then, one obtains 
	following equations as
	\begin{equation}
	\begin{array}{lll}
	y_{n+1} &=& \widehat{S}_{4, \alpha}(y_n) = \frac{\alpha^2y_n^4 -6\alpha y_n^2 +1}{\alpha y_n^2 -y_n},\\
	\widehat{S}_{4, 0}(y_n) &=& -\frac{1}{y_n}.
	\end{array}
	\end{equation}
	Then $y = \pm i$ are the fixed points for $\widehat{S}_{4, 0}$ and one has that 
	\begin{equation}
	\begin{array}{lll}
	\widehat{S}_{4, 0}'(y_n) &=& \frac{1}{y_n^2},\\
	\widehat{S}_{4, 0}'(\pm i) &=& -1.
	\end{array} \label{Floquet multiplier K=4}
	\end{equation}
	Thus, the Floquet multiplier at $\alpha = 0$ is -1.
	The values of $S_{4, \alpha}'$ at the other fixed points are as follows.
	\begin{equation}
	\begin{array}{ccl}
	\displaystyle \lim_{\alpha \to +0} S_{4, \alpha}'\left(\pm \sqrt{\frac{1-6\alpha+\sqrt{40\alpha^2-16\alpha +1}}{1-\alpha}}\right) &=& 0,\\
	S_{4, 1}'\left(\pm \frac{1}{\sqrt{5}}\right) &=& \frac{27}{2},\\
	S_{4, 1}'\left(\pm \infty\right) &=& 1.
	\end{array}
	\label{derivative K=4}
	\end{equation}
	From \eqref{derivative K=4},  the Floquet multiplier at $\alpha=1$ is unity.
}
\fi

\paragraph{Exactness}
According to \cite{Mackey,Lasota,Schwegler}, the exactness is defined as follows.
\begin{definition}[Exactness]
	A dynamics $T$ on a phase space $\mathcal{X}$ with transfer operator $\mathcal{P}_{T}$ and 
	unique stationary density $f_*$ is called to be exact if and only if
	\begin{equation}
	\lim_{n \to \infty} \|\mathcal{P}_{T}^n f -f_*\|_{L^1} =0,
	\end{equation}
	for every initial density $f \in \mathcal{D}$ where $\mathcal{D}$ denotes all densities on $\mathcal{X}$.
	
	\noindent This definition is equivalent to as follows,
	\begin{equation}
	\lim_{n \to \infty} \mu_*(T^n s) = 1,~ {}^\forall s \in \mathcal{B}, ~\mu_*(s) >0,
	\end{equation}
	where $\mathcal{B}$ denotes the $\sigma$-algebra and $\mu_*$ denotes the invariant measure
	corresponding the invariant density $f_*$.
\end{definition}
In terms of exactness, we obtain the following theorem.
\setcounter{theorem}{1}
\renewcommand{\thetheorem}{\Alph{theorem}}
\begin{theorem}\label{Exact}
	If the the Condition A is satisfied, the SGB transformations $\{S_{K, \alpha}\}$ are exact.
\end{theorem}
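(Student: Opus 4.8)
The plan is to reduce exactness of $S_{K,\alpha}$ to that of a simple model map on the circle through a measure-preserving conjugacy, using the fact that exactness (triviality of the tail $\sigma$-algebra) is an invariant of measure-theoretic isomorphism. By Theorem~\ref{Theorem: unique solution}, under the Condition~A there is a unique invariant Cauchy measure $\mu_*$ with scale $\gamma_*\in(0,\infty)$. Its distribution function $\Psi(x)=\tfrac12+\tfrac1\pi\arctan(x/\gamma_*)$ conjugates $S_{K,\alpha}$ to a map $\tilde S=\Psi\circ S_{K,\alpha}\circ\Psi^{-1}$ on $S^1=[0,1)$ that preserves Lebesgue measure; equivalently, the substitution $x=\gamma_*\cot(\pi s)$ carries $\mu_*$ to Lebesgue measure on $[0,1)$. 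Since $\Psi$ is an a.e.\ bijection and the exceptional set $B$ is $\mu_*$-null, it suffices to prove that $\tilde S$ is exact.

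First I would expose the complex-analytic structure behind $\tilde S$. The map $S_{K,\alpha}(x)=\alpha K\cot(K\,\mathrm{arccot}\,x)$ extends holomorphically, and the content of Theorem~\ref{Theorem: unique solution} — that a Cauchy density of scale $\gamma$ is pushed forward to one of scale $\alpha K G_K(\gamma)$ — is precisely the statement that this extension sends the interior point $i\gamma$ to $i\,\alpha K G_K(\gamma)\in\mathbb{H}$. Hence the extension is a proper holomorphic self-map of the upper half-plane $\mathbb{H}$ of degree $K$, and the fixed-point equation $\gamma_*=\alpha K G_K(\gamma_*)$ from \eqref{invariant density} shows it fixes $i\gamma_*$. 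Conjugating by the Cayley transform $C(w)=(w-i\gamma_*)/(w+i\gamma_*)$, which maps $\mathbb{H}\to\mathbb{D}$ with $i\gamma_*\mapsto 0$ and realizes on the boundary exactly the coordinate $s$ above, turns $S_{K,\alpha}$ into a proper holomorphic self-map $B:\mathbb{D}\to\mathbb{D}$ of degree $K$ with $B(0)=0$, i.e.\ a finite Blaschke product of degree $K$ fixing the origin. Its boundary restriction $B|_{S^1}$ is exactly $\tilde S$, and the normalization $B(0)=0$ is equivalent to $B|_{S^1}$ preserving Lebesgue measure, in agreement with the first paragraph.

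It then remains to prove that the boundary restriction of a finite Blaschke product of degree $K\ge 2$ fixing $0$ is exact. Here I would use that $B|_{S^1}$ is a $K$-to-one, full-branch Markov endomorphism of $(S^1,\mathrm{Leb})$ — each of its $K$ monotone branches maps onto all of $S^1$ — and invoke the exactness criterion for such maps in the Lasota–Mackey framework \cite{Lasota,Mackey}; the prototype is $B(z)=z^K$, i.e.\ $s\mapsto Ks\bmod 1$, which is manifestly exact. Pulling this back through the isomorphism of the first paragraph yields exactness of $S_{K,\alpha}$ for every admissible $K$ under the Condition~A. I expect the main obstacle to be precisely this last step: a finite Blaschke product fixing $0$ need not be \emph{uniformly} expanding on $S^1$ (its logarithmic derivative is a sum of $K$ Poisson kernels, which integrates to $K$ but may dip below $1$ pointwise), so one cannot simply quote a bare uniform-expansion theorem and must instead lean on the full-branch, bounded-distortion structure to certify the trivial tail field. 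A secondary, purely technical point is the branch-and-sign bookkeeping needed to confirm that the holomorphic extension genuinely maps $\mathbb{H}$ into $\mathbb{H}$ with $S_{K,\alpha}(i\gamma_*)=+\,i\gamma_*$, and that the singular set $B$ carries no $\mu_*$-mass.
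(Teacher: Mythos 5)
Your proposal is sound and ends at the same reduction as the paper (a $K$-to-one, full-branch circle map), but it gets there and finishes by a genuinely different route, so a comparison is worthwhile. The paper conjugates by $x=\cot(\pi\theta)$ \emph{without} normalizing by the invariant scale: the resulting map $\bar S_{K,\alpha}(\theta)=\tfrac{1}{\pi}\cot^{-1}\{\alpha K\cot(\pi K\theta)\}$ on $[0,1)$ then has exact translational symmetry (its $K$ branches are translates of one another by $j/K$), and the paper builds cylinder intervals $I_{j,n}$ with $\bar S_{K,\alpha}^n(I_{j,n})=[0,1)$ and verifies the definition of exactness directly from the fact that forward images of cylinders fill the space; no expansion estimate is used, and indeed none is available in that coordinate, since at the branch midpoints $\bar S_{K,\alpha}'=\alpha K^2$, which is less than $1$ when $K=2N$ and $\alpha<1/K^2$. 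Your normalization by the invariant distribution function instead makes Lebesgue measure invariant and identifies the conjugated map as the boundary restriction of a degree-$K$ Blaschke product fixing the origin; this complex-analytic identification is the genuinely new ingredient, and it buys more than you claim. In fact the obstacle you flag is vacuous: writing the Blaschke product (call it $W$, to avoid the paper's symbol $B$ for the exceptional set) with zeros $a_0=0,a_1,\dots,a_{K-1}$, one has on the unit circle $|W'(\zeta)|=\sum_{j}\frac{1-|a_j|^2}{|\zeta-a_j|^2}$, and the zero at the origin contributes identically $1$, so $|W'(\zeta)|\ge 1+\sum_{j\ge1}\frac{1-|a_j|}{1+|a_j|}>1$ uniformly. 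Thus in your coordinate the map is an analytic, uniformly expanding, Lebesgue-preserving covering of degree $K$, and exactness follows from the standard criterion you invoke with no bounded-distortion detour --- arguably cleaner than the paper's cylinder argument, whose assertion that every positive-measure set ``includes'' cylinder sets is literally false (e.g.\ for a fat Cantor set) and needs a density-point refinement. What the paper's approach buys in exchange is that it never needs any expansion or distortion input at all, only the full-branch structure and translational symmetry; what yours buys is a quotable, uniform-expansion finish, at the price of the (correct, and worth writing out) verification that the holomorphic extension maps $\mathbb{H}$ into $\mathbb{H}$ with $i\gamma_{K,\alpha}$ fixed and that the deleted singular set is null for $\mu_*$.
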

\setcounter{theorem}{2}
\renewcommand{\thetheorem}{\arabic{theorem}}
The proof is given in Supplemental material.
From Theorems \ref{Theorem: unique solution} and \ref{Exact}, when the Condition A is satisfied, 
the map $S_{K, \alpha}$ preserves certain Cauchy distribution $f_*$ and 
any initial density function $f$ defined on $\mathbb{R}\backslash B$
converges to $f_*$ as
\begin{equation}
\lim_{n \to \infty} \|\mathcal{P}_{S_{K, \alpha}}^nf - f_*\|_{L^1} =0.
\end{equation}
For example, $S_{3, \alpha}$, $S_{4, \alpha}$ and $S_{5, \alpha}$ are \textit{exact} for $\frac{1}{9}< \alpha <1$,
$0<\alpha <1$ and $\frac{1}{25} < \alpha < 1$, respectively.
According to \cite{Lasota}, if the SGB transformations are exact, then the corresponding dynamical systems are
mixing and ergodic. Therefore the following Corollary holds.
\begin{corollary}
	Suppose that the Condition A is satisfied. Then the dynamical system $(\mathbb{R}\backslash B, S_{K, \alpha}, \mu_*)$ has the mixing property
	and it is ergodic where  $\mu_*$ is the invariant measure corresponding to the invariant density $f_*$.
\end{corollary}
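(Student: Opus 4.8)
The plan is to derive the Corollary directly from Theorem~\ref{Exact} together with the classical hierarchy of ergodic-theoretic properties, which orders exactness above (strong) mixing above ergodicity. Since Condition~A is assumed, Theorem~\ref{Exact} already furnishes that $S_{K,\alpha}$ is exact with respect to the invariant probability measure $\mu_*$ associated with the invariant Cauchy density $f_*$. The one point worth emphasizing at the outset is that although the phase space $\mathbb{R}\backslash B$ carries infinite Lebesgue measure, $\mu_*$ is a genuine probability measure (the Cauchy density integrates to one), so the usual definitions of mixing and ergodicity for probability-preserving systems apply verbatim and the implications below are legitimate. I would also record that $B$, being a countable union of preimages of finitely many singular points, is Lebesgue-null and hence $\mu_*$-null, so it never interferes with the measure identities that follow.

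First I would establish the implication exactness $\Rightarrow$ mixing through the transfer (Perron--Frobenius) operator $\mathcal{P}_{S_{K,\alpha}}$, exploiting its duality with the Koopman operator $g\mapsto g\circ S_{K,\alpha}$ against Lebesgue measure $m$. For measurable sets $A,B$ with $\mu_*(A)>0$, set $f=\mathbf{1}_A f_*/\mu_*(A)$, which is a density on $\mathbb{R}\backslash B$. Using $\mathbf{1}_{S_{K,\alpha}^{-n}B}=\mathbf{1}_B\circ S_{K,\alpha}^{n}$ and the defining duality of the transfer operator, one obtains
\begin{equation}
\mu_*\!\left(A\cap S_{K,\alpha}^{-n}B\right)=\mu_*(A)\int \left(\mathcal{P}_{S_{K,\alpha}}^{n}f\right)\mathbf{1}_B\, dm.
\end{equation}
Exactness supplies $\|\mathcal{P}_{S_{K,\alpha}}^{n}f-f_*\|_{L^1}\to 0$, so the integral converges to $\int f_*\mathbf{1}_B\,dm=\mu_*(B)$, yielding $\mu_*(A\cap S_{K,\alpha}^{-n}B)\to\mu_*(A)\mu_*(B)$, which is precisely strong mixing.

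Next, mixing $\Rightarrow$ ergodicity follows from the standard invariant-set argument: if $A$ is invariant, i.e. $S_{K,\alpha}^{-1}A=A$ up to $\mu_*$-null sets, then taking $B=A$ in the mixing relation forces $\mu_*(A)=\mu_*(A)^2$, so $\mu_*(A)\in\{0,1\}$, which is ergodicity. Chaining the two implications delivers the claimed mixing property and ergodicity.

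I do not expect a genuine obstacle here, as all of the analytic difficulty has already been absorbed into Theorem~\ref{Exact}; the Corollary is a formal consequence of that exactness. The only care required is the bookkeeping of the non-compact setting, namely confirming that the transfer-operator duality and the auxiliary density $f=\mathbf{1}_A f_*/\mu_*(A)$ are well defined on $\mathbb{R}\backslash B$ and that the $\mu_*$-null set $B$ can be discarded throughout. Rather than reprove the hierarchy in full, I would invoke \cite{Lasota} for the standard implications exact $\Rightarrow$ mixing $\Rightarrow$ ergodic, which is exactly the route the statement indicates, and present the transfer-operator identity above only as the concrete justification of the first step.
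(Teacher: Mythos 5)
Your proposal is correct and follows essentially the same route as the paper: the paper's entire proof consists of invoking Theorem~\ref{Exact} together with the standard hierarchy exact $\Rightarrow$ mixing $\Rightarrow$ ergodic, citing \cite{Lasota}, exactly as you do. Your transfer-operator derivation of exactness $\Rightarrow$ mixing and the invariant-set argument for mixing $\Rightarrow$ ergodicity are correct spellings-out of the implications the paper leaves to the cited reference, so you have merely supplied detail the paper omits rather than taken a different path.
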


Using the property of exactness, one can obtain the \textit{explicit} formula of the Lyapunov exponent such that
\begin{equation}
\lambda_{K, \alpha} = \frac{1}{\pi}\int_{-\infty}^\infty \log\left|\frac{dS_{K, \alpha}}{dx}\right| \frac{\gamma_{K, \alpha}}{x^2 +\gamma_{K, \alpha}^2}dx.
\label{the Lyapunov exponent}
\end{equation}
From Pesin's formula, one sees that the Kolmogorov-Sinai entropy is equivalent to the Lyapunov exponent since the SGB transformations
are a one-dimensional map.

For $\alpha>1$, changing variable as $z_n = 1/x_n$, one obtains the map $\widetilde{S}_{K, \alpha}$ defined as
\begin{equation}
\left\lbrace
\begin{array}{lll}
\widetilde{S}_{2N, \alpha}(z) &\overset{\mathrm{def}}{=}& 
\frac{\displaystyle \sum_{i=0}^{N-1}(-1)^i {}_{2N}C_{2N-2i-1}z^{2N-2i-1}}{\displaystyle \alpha K\displaystyle \sum_{i=0}^{N}(-1)^i {}_{2N}C_{2N-2i}z^{2N-2i}},\\
\widetilde{S}_{2N+1, \alpha}(z) &\overset{\mathrm{def}}{=}& 
\frac{\displaystyle \sum_{i=0}^N(-1)^i{}_{2N+1}C_{2N-2i+1}z^{2N-2i+1}}{\displaystyle \alpha K\displaystyle \sum_{i=0}^{N}(-1)^i {}_{2N+1}C_{2N-2i} z^{2N-2i}}.
\end{array}
\right.
\end{equation}
Then one has that $\left|\frac{d\widetilde{S}_{K, \alpha}}{dz}(0)\right| = \frac{1}{\alpha}<1$,
so that for any $K$, the orbits are attracted into the infinite point for $\alpha>1$.
Thus, the Lyapunov exponent $\lambda_{K, \alpha}$ for $\alpha>1$ is derived from the inclination at the infinite point as
\begin{equation}
\lambda_{K, \alpha} = \log \alpha,~~\mbox{for}~{}^\forall K \in\mathbb{N}\backslash\{1\}.
\end{equation}

\paragraph{Scaling behavior}
At the edges of the Condition A, one has that 
\begin{equation}
\begin{array}{lll}
\gamma_{K, \alpha}&=&\infty, 
\begin{array}{lll}
\mbox{for} & \alpha=1 & {}^\forall K,
\end{array}\\
\gamma_{K, \alpha} &=&0, \left\lbrace
\begin{array}{llll}
\mbox{for} & \alpha=0 & \mbox{in the case of}& K=2N,\\
\mbox{for} & \alpha=\frac{1}{K^2}& \mbox{in the case of}& K=2N+1.
\end{array}
\right.
\end{array}
\end{equation}
Then the Lyapunov exponent converges to zero at the edges of the Condition A.
In order to discuss the critical phenomena, define critical points as $\alpha_{c1}=1$, $\alpha_{c2} = \frac{1}{(2N+1)^2}$ and $\alpha_{c3}=0$
and define critical exponents $\nu_1$, $\nu_2$ and $\nu_3$ corresponding to $\alpha_{ci}, i=1,2,3$.
In terms of the scaling behavior of the Lyapunov exponents $\lambda \sim b\left|\alpha- \alpha_{ci}\right|^{\nu_i}, b>0, i=1,2,3$
the following theorem holds.
\setcounter{theorem}{2}
\renewcommand{\thetheorem}{\Alph{theorem}}
\begin{theorem}\label{Scaling behavior}
	Suppose that the Condition A is satisfied.
	\begin{itemize}
		\item For any $K\in \mathbb{N}\backslash\{1\}$, it holds that 
		$\nu_1= \frac{1}{2}$ as $\alpha \to 1-0$.
		\item For any $K\in \mathbb{N}\backslash\{1\}$, it holds that 
		$\nu_1= 1$ as $\alpha \to 1+0$.
		\item For $K=2N+1$, it holds that $\nu_2 = \frac{1}{2}$ as $\alpha \to \frac{1}{K^2}+0$.
	\end{itemize}
\end{theorem}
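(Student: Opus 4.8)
The starting point is the explicit formula for $\lambda_{K,\alpha}$ in Eq.~\eqref{the Lyapunov exponent}, whose validity is guaranteed by Theorem~\ref{Exact}: exactness implies ergodicity with respect to the invariant Cauchy measure $\mu_*$ of scale $\gamma_{K,\alpha}$, so the Lyapunov exponent equals the space average $\int \log|S_{K,\alpha}'|\,d\mu_*$. First I would obtain a closed form for the integrand. Writing $(x+i)^K = A_K(x) + iB_K(x)$ with real polynomials $A_K,B_K$, the representation $F_K = A_K/B_K$ gives $S_{K,\alpha} = \alpha K A_K/B_K$, and a short computation using $\tfrac{d}{dx}(x+i)^K = K(x+i)^{K-1}$ yields $A_K'B_K - A_KB_K' = K(1+x^2)^{K-1}$, hence
\[
S_{K,\alpha}'(x) = \frac{\alpha K^2 (1+x^2)^{K-1}}{B_K(x)^2}, \qquad B_K(x) = K\prod_{m=1}^{K-1}\Bigl(x-\cot\tfrac{m\pi}{K}\Bigr).
\]
Therefore $\log|S_{K,\alpha}'| = \log(\alpha K^2) + (K-1)\log(1+x^2) - 2\log|B_K|$.

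Second, I would integrate term by term against $\mu_*$ using the harmonic-extension (Poisson-kernel) identities $\int \log|x-a|\,d\mu_\gamma = \log\sqrt{a^2+\gamma^2}$ (for real $a$) and $\int \log(1+x^2)\,d\mu_\gamma = 2\log(\gamma+1)$, combined with the trigonometric product identity
\[
\prod_{m=1}^{K-1}\Bigl(\gamma^2+\cot^2\tfrac{m\pi}{K}\Bigr) = \frac{\bigl[(\gamma+1)^K-(\gamma-1)^K\bigr]^2}{4K^2},
\]
which one proves via $B_K(x) = \tfrac{1}{2i}[(x+i)^K-(x-i)^K]$ evaluated at $x=\pm i\gamma$. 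These collapse the sum over the $K-1$ zeros of $B_K$ and give, with $\gamma=\gamma_{K,\alpha}$, the closed form
\[
\lambda_{K,\alpha} = \log\alpha + 2(K-1)\log(\gamma+1) - 2\log\bigl[(\gamma+1)^K-(\gamma-1)^K\bigr] + 2\log(2K).
\]

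Third, I would pass to a hyperbolic variable dictated by the $K$-angle coth formula $G_K(\coth t)=\coth(Kt)$. Setting $\gamma=\coth u$ when $\gamma>1$, or $\gamma=\tanh v$ when $\gamma<1$ (using $G_K(\tanh v)=\tanh(Kv)$ for odd $K$), the fixed-point equation \eqref{invariant density} becomes $\alpha = \tanh(Ku)/(K\tanh u)$ and $\alpha = \tanh v/(K\tanh(Kv))$ respectively, while the closed form collapses in both cases to the single expression $\lambda_{K,\alpha} = 2(K-1)t + \log K + \log\bigl(\sinh 2t/\sinh 2Kt\bigr)$ with $t=u$ or $t=v$. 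Each critical point is the limit $t\to0^+$: $\alpha\to1^-$ corresponds to $u\to0^+$ and, for odd $K$, $\alpha\to(1/K^2)^+$ corresponds to $v\to0^+$. The scaling is then a routine Taylor expansion: $\lambda = 2(K-1)t + O(t^2)$, while $1-\alpha = \tfrac{K^2-1}{3}u^2 + O(u^4)$ and $\alpha-\tfrac{1}{K^2} = \tfrac{K^2-1}{3K^2}v^2 + O(v^4)$. Eliminating $t$ gives $\lambda \sim 2\sqrt{3(K-1)/(K+1)}\,(1-\alpha)^{1/2}$ and $\lambda \sim 2K\sqrt{3(K-1)/(K+1)}\,(\alpha-1/K^2)^{1/2}$, so $\nu_1=\nu_2=\tfrac12$ with positive prefactors. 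For $\alpha\to1^+$ the excerpt's explicit value $\lambda_{K,\alpha}=\log\alpha \sim (\alpha-1)$ immediately gives $\nu_1=1$.

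The main obstacle is the second step: proving the product identity and rigorously justifying the Poisson-kernel integrations, including the integrability of $\log|B_K|$ at its zeros $\cot(m\pi/K)$ (only a logarithmic singularity) and the boundedness of $\log|S_{K,\alpha}'|$ as $x\to\pm\infty$, so that the integral defining $\lambda_{K,\alpha}$ converges and the harmonic-extension formula applies. A secondary point requiring care is verifying that the two hyperbolic parametrizations exhaust the range $0<\gamma<\infty$ and that the critical points indeed lie in the respective regimes ($\alpha_c=1/K^2<1/K$ forces the $\tanh v$ branch). Once the closed form and the hyperbolic relation $\alpha=\alpha(t)$ are established, the extraction of the exponents $\tfrac12$ reduces to the leading-order expansions above.
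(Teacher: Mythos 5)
Your proof is correct, and it takes a genuinely different route from the paper's. The paper never evaluates the Lyapunov integral in closed form for general $K$: after substituting $x=\cot\theta$ it writes $\lambda_{K,\alpha}=\frac{1}{\pi}\int_0^\pi f_1(\theta,\gamma_{K,\alpha})\,d\theta$ and Taylor-expands this integral in $z=1/\gamma_{K,\alpha}$ (for $\alpha\to 1-0$) or in $\gamma_{K,\alpha}$ (for $K=2N+1$, $\alpha\to\frac{1}{K^2}+0$), concluding $\lambda_{K,\alpha}\sim O(1/\gamma_{K,\alpha})$ resp.\ $O(\gamma_{K,\alpha})$ with the first-order coefficient left as an unevaluated integral; it then proves separate lemmas, by expanding the polynomial form of the fixed-point equation $\gamma=\alpha K G_K(\gamma)$, giving $1/\gamma_{K,\alpha}\sim O(\sqrt{1-\alpha})$ and $\gamma_{2N+1,\alpha}\sim O\bigl(\sqrt{\alpha-\tfrac{1}{K^2}}\bigr)$, and combines the two asymptotics; the $\alpha\to1+0$ case is handled exactly as you do, via $\lambda_{K,\alpha}=\log\alpha\sim\alpha-1$. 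You instead evaluate the integral exactly (harmonic extension plus the product identity) and then parametrize hyperbolically so that both $\lambda$ and $\alpha$ become elementary functions of a single variable $t$ with the critical points at $t\to 0^{+}$. Your intermediate results check out: $S_{K,\alpha}'=\alpha K^2(1+x^2)^{K-1}/B_K^2$ agrees with the supplement's derivative formulas, your closed form reproduces the paper's $\lambda_{3,\alpha}$ exactly, your hyperbolic fixed-point equations are the same $\coth$/$\tanh$ substitutions the paper uses inside its lemmas (just pushed further), and your amplitudes $2\sqrt{3(K-1)/(K+1)}$ and $2K\sqrt{3(K-1)/(K+1)}$ match the paper's explicit computations for $K=3$ and $K=4$ (namely $2\sqrt{3/2}$, $6\sqrt{3/2}$, $6/\sqrt{5}$); for $K=5$, $\alpha\to1-0$, your value $2\sqrt{2}$ in fact corrects an arithmetic slip in the supplement (which states $\sqrt{1-\alpha}/2$), without affecting the exponent. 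What your route buys: explicit positive prefactors $b$ for every $K$ (the paper obtains constants only case by case for $K=3,4,5$), confirmation of $b>0$ as demanded in \eqref{Universality}, and a treatment in which the logarithmic singularities of the integrand are absorbed once and for all into the closed form, whereas the paper's expansion under the integral sign near the discontinuity points $a_n$ is asserted rather than justified in detail. What the paper's route buys: it needs neither the product identity nor the Poisson-kernel evaluations (the technical burden you yourself flag), since only the leading order in $\gamma_{K,\alpha}$ matters for extracting the exponents.
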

\setcounter{theorem}{2}
\renewcommand{\thetheorem}{\arabic{theorem}}
The proof is given in Supplemental material.

Discuss the Floquet multipliers in the case of $(K, \alpha) = ({}^\forall K, \alpha_{c1})$, $(2N+1, \alpha_{c2})$,
and $(2N, \alpha_{c3})$ for $N \in \mathbb{N}$ .
By changing variable as $x=\cot \theta$, the derivative of the map $S_{K, \alpha}$ is rewritten as
$\frac{dS_{K, \alpha}}{dx} = \alpha K^2 \frac{\sin^2\theta}{\sin^2K\theta}$.

\noindent (i) In the case of $({}^\forall K, \alpha_{c1})$, the derivatives at the infinite point are denoted as
\begin{equation}
\displaystyle \lim_{x \to +\infty} \frac{dS_{K, 1}}{dx} 
= \lim_{\theta \to +0} \frac{(K\theta)^2}{\sin^2 K\theta} \frac{\sin^2\theta}{\theta^2}=1.
\end{equation}
Thus, the Floquet multiplier $\chi$ for $(K, \alpha)=({}^\forall K, \alpha_{c1})$ is unity.

\noindent (ii) In the case of $(2N+1, \alpha_{c2})$, the original point is the fixed point. 
Then, the derivative at the original point is denoted as
\begin{equation}
\displaystyle \frac{dS_{K, \frac{1}{K^2}}}{dx}(0) 
= \frac{\sin^2(\frac{\pi}{2})}{\sin^2\left\lbrace (2N+1)\frac{\pi}{2}\right \rbrace}=1.
\end{equation}
Thus,  at $(K, \alpha) = (2N+1, \alpha_{c2})$, it holds that $\chi=1$.

\noindent (iii) In the case of $(2N, \alpha_{c3})$, applying scale transformation such that $x=\sqrt{\alpha}y$,
one has $y_{n+1} = \widehat{S}_{K, \alpha}(y_n)$ and
$y_{n+1}= \widehat{S}_{K, 0}(y_n) = -\frac{1}{y_n}$. Then, it holds that $\chi = -1$ at $(K, \alpha) = (2N, \alpha_{c3})$.

From (i), (ii) and (iii), one sees that
in the case of $K=2N+1$, only \textit{Type 1} intermittency occurs
at $\alpha=\alpha_{c1}$ and $\alpha_{c2}$
and that in the case of $K=2N$, \textit{Type 1} intermittency occurs at $\alpha=\alpha_{c1}$ and
\textit{Typle 3} intermittency occurs at $\alpha=\alpha_{c3}$.
Therefore, it has been proven that for countably infinite number of \textit{exact} maps, 
the \textit{universal} scaling behavior \eqref{Universality} holds where \textit{Type 1} intermittency occurs.
In Supplemental material, the Floquet multipliers corresponding to $K=3,4$ and 5 are illustrated.

\paragraph{In the case of $K=3, 4$ and 5}
In this paragraph, the examples corresponds to $K=3,4$ and 5 are illustrated.
The solutions of \eqref{invariant density} which satisfies $0<\gamma_{K, \alpha} <\infty$ are uniquely determined as follows.
\begin{equation}
\begin{array}{lll}
\gamma_{3, \alpha} &=& \displaystyle\sqrt{\frac{9\alpha-1}{3- 3\alpha}},\\
\gamma_{4, \alpha} &=& \displaystyle \sqrt{\frac{6\alpha-1+\sqrt{32\alpha^2-8\alpha+1}}{2(1-\alpha)}},\\
\gamma_{5, \alpha} &=& \displaystyle\sqrt{\frac{-5(1-5\alpha)+\sqrt{20(25\alpha^2-6\alpha+1)}}{5(1-\alpha)}}.
\end{array}
\end{equation}
From the above discussion, one knows that in the case of $K=3$ and 5, only \textit{Type 1} intermittency occurs
and in the case of $K=4$, both \textit{Type 1} and \textit{Type 3} intermittency occur.

The Lyapunov exponents in the case of $K= 3, 4$ and 5 are given as follows.
\begin{equation}
\begin{array}{lll}
\lambda_{3, \alpha} &=&  \displaystyle \log \left|\frac{1}{\alpha}\left( \frac{3(1-\alpha)}{8}\right)^2\left[1+\sqrt{\frac{9\alpha-1}{3-3\alpha}}\right]^4 \right|,
\\
\lambda_{4, \alpha} &=& \displaystyle \log\left|\frac{\alpha(1+\gamma_{4, \alpha})^6}{\gamma_{4, \alpha}^2(1+\gamma_{4, \alpha}^2)^2}\right|,\\
\lambda_{5,\alpha} &=&   \log \left|\frac{25}{256\alpha}\frac{(1-\alpha)^4}{(\sqrt{125\alpha^2-30\alpha+5}+11\alpha-1)^2}|1+\gamma_{5, \alpha}|^8\right|.
\end{array} \label{Lyapunov K}
\end{equation}
\begin{figure}[!h]		
	\centering
	\begin{minipage}[t]{.5\columnwidth}
		\includegraphics[width=\columnwidth]{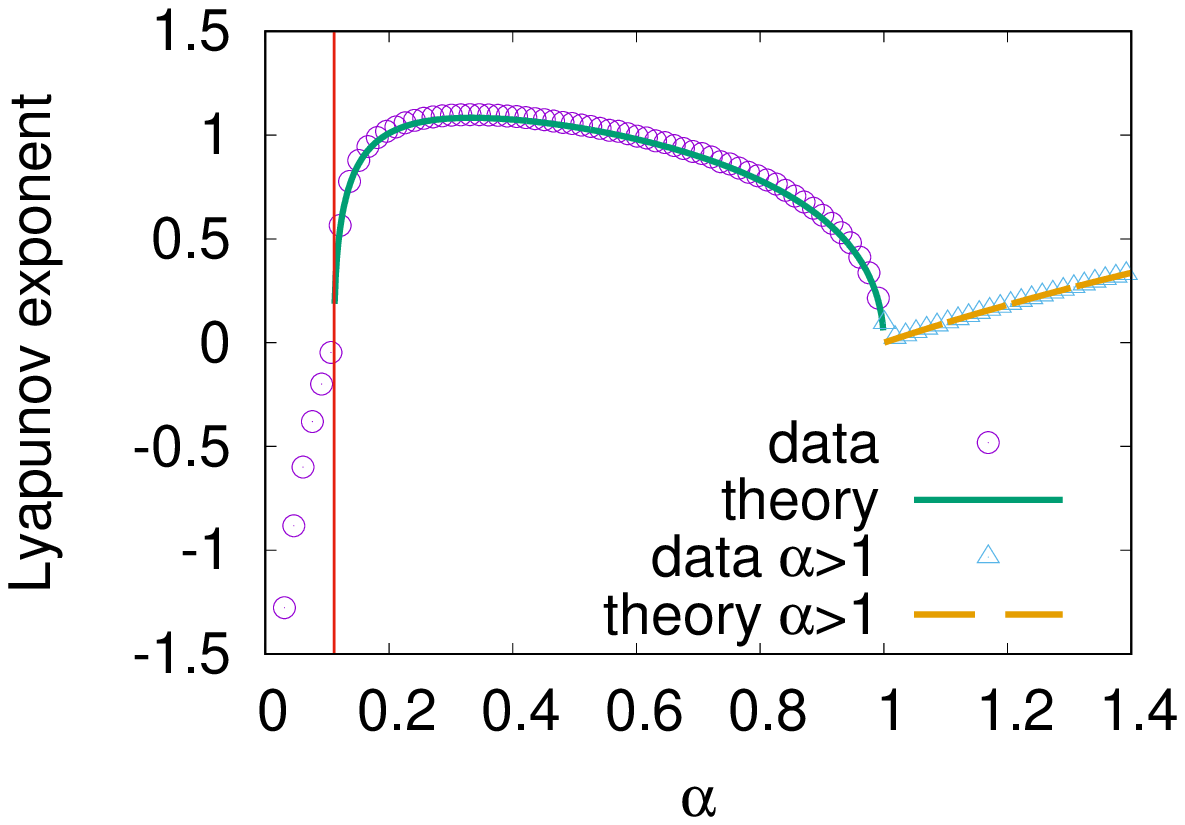}
		
		\subcaption{$K=3$}
		\label{Fig: Lyapunov K=3}    
	\end{minipage}
	
	\begin{minipage}[t]{.5\columnwidth}
		\includegraphics[width=\columnwidth]{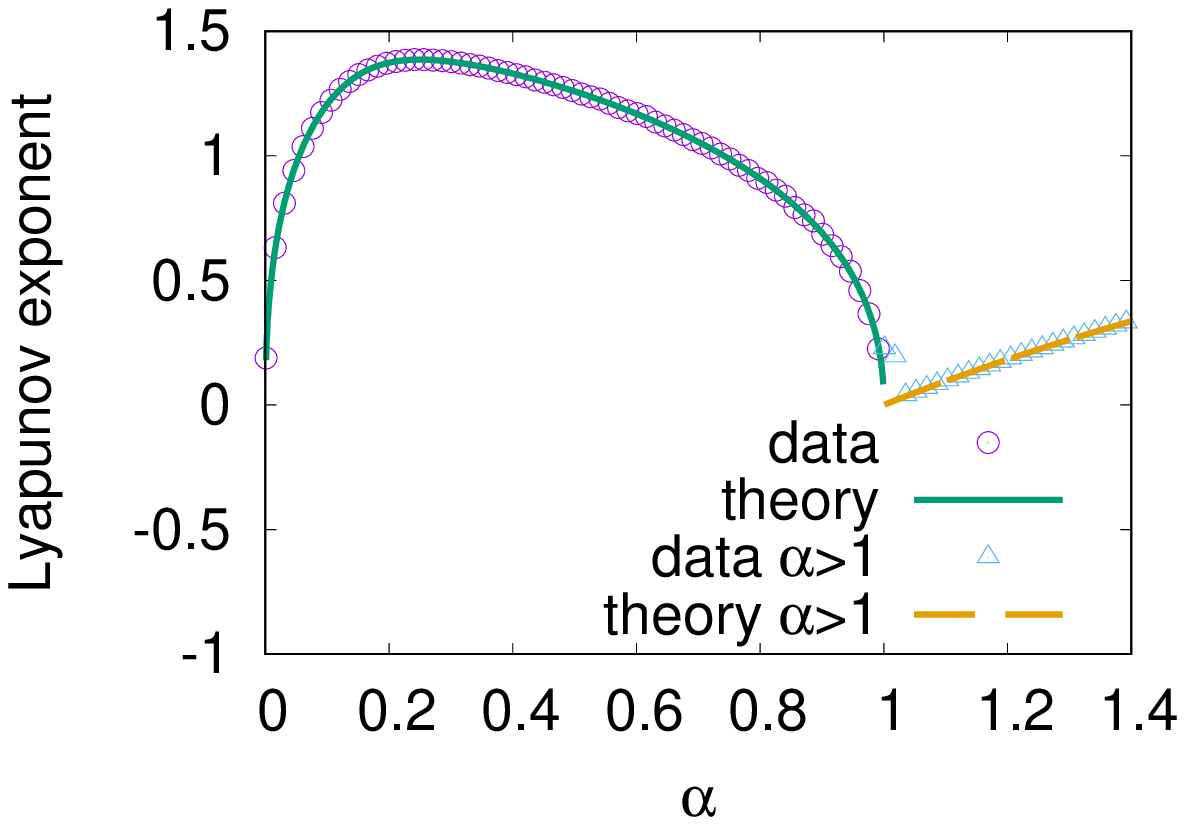}
		
		\subcaption{$K=4$}
		\label{Fig: Lyapunov K=4}
	\end{minipage}
	
	\begin{minipage}[t]{.5\columnwidth}
		\includegraphics[width=\columnwidth]{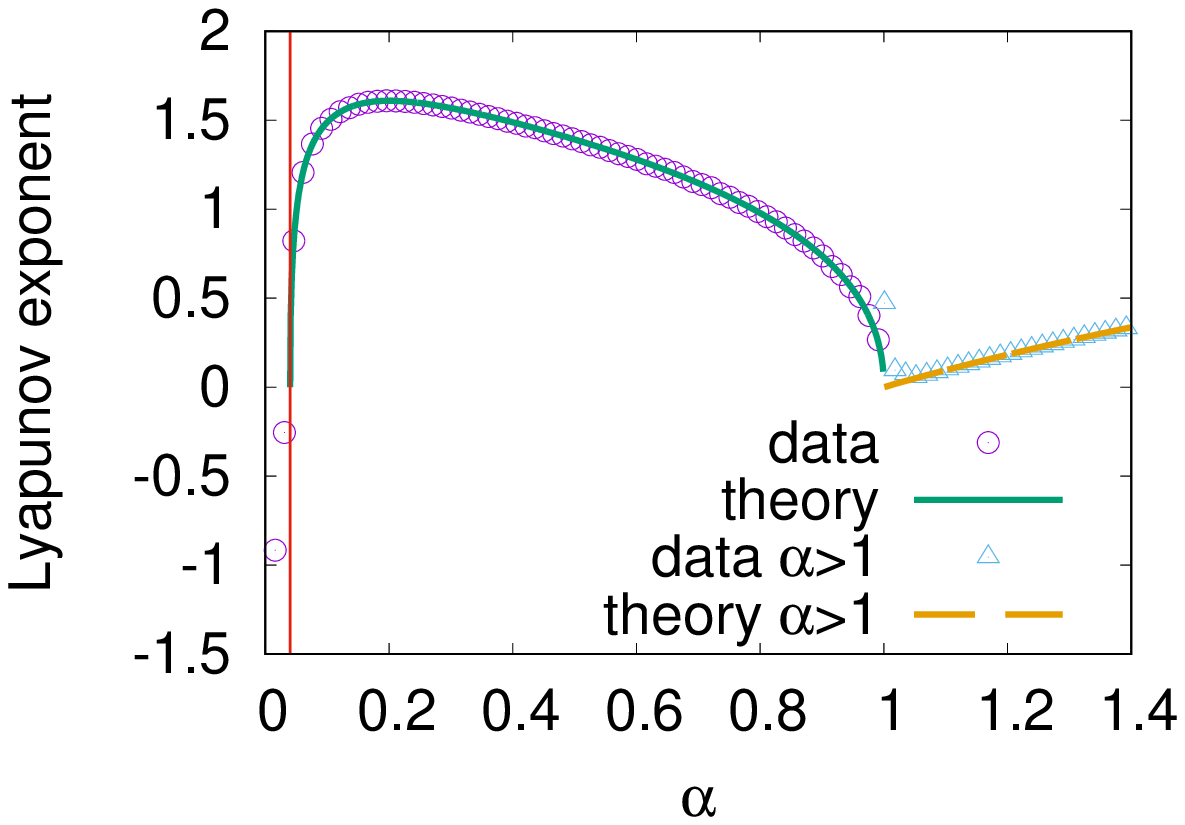}
		
		\subcaption{$K=5$}
		\label{Fig: Lyapunov K=5}
	\end{minipage}
	\caption{{\footnotesize 
			Relations between the Lyapunov exponents of the SGB transformations and $\alpha$ for $K=3,4$ and 5.
			Circles and triangles represent numerical results for $\alpha\leq1$ and for $\alpha>1$, respectively.	  
			The solid lines and broken lines represent the analytical results for $\alpha\leq1$ and for $\alpha>1$, respectively.	
			The initial point is $x_0 = 5\sqrt{7}$. The iteration number is $1\times 10^5$ for $\alpha \leq 1$
			and 200 for $\alpha>1$.
			A vertical line corresponds to $\alpha=\frac{1}{9}, \frac{1}{25}$,
			respectively.}}
	\label{Fig: Lyapunov}
\end{figure}

\begin{figure}[h]		
	\centering
	\includegraphics[width=\columnwidth]{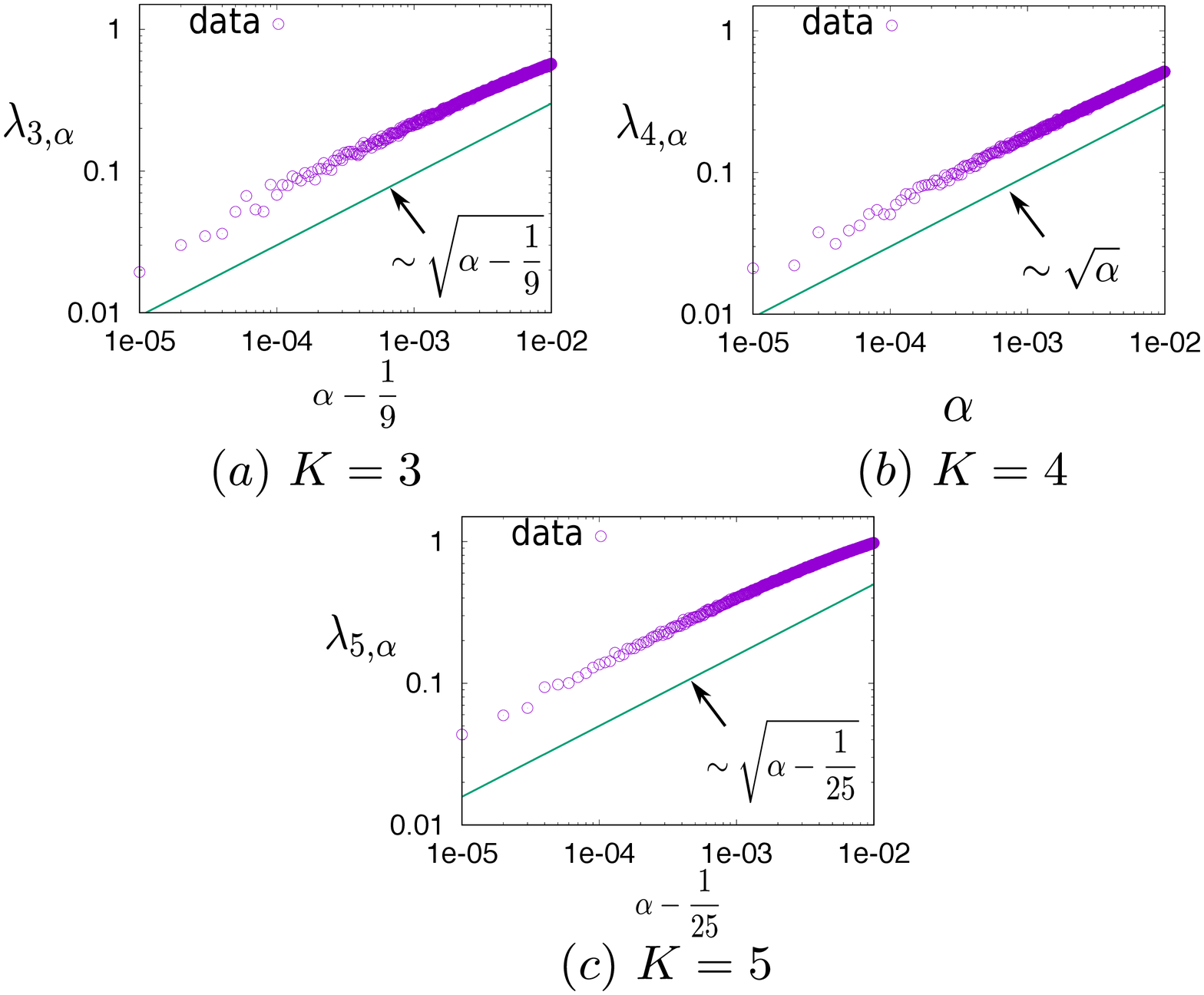}
	\caption{{\footnotesize 
			Scaling behavior of the Lyapunov exponents of the SGB transformation in the case of $K=3,4$ and 5.
			Circles represent numerical simulation and solid lines represent the order.
			The initial point is $x_0 = 5\sqrt{7}$. The iteration number is $2\times 10^5$.}}
	\label{Fig: Lyapunov scaling}
\end{figure}
Figures \ref{Fig: Lyapunov K=3}, \ref{Fig: Lyapunov K=4} and \ref{Fig: Lyapunov K=5}
show the Lyapunov exponents against $\alpha$ in the case of $K = 3,4$ and 5, respectively.
One sees that the numerical simulations are exactly consistent with the obtained analytical formulae.
Since it holds that at the critical points $\partial \lambda_{K, \alpha}/\partial \alpha = \pm \infty$,
one sees that the parameter dependence of the Lyapunov exponent diverges at the critical points.
This means the computational difficulty in obtaining the true value of the Lyapunov exponent by numerical simulation.
Figure \ref{Fig: Lyapunov scaling} shows the scaling behavior of the Lyapunov exponent. 
One sees that $\nu_2=\frac{1}{2}$ and $\nu_3=\frac{1}{2}$ in the case of $K=3,4$ and 5.  

\paragraph{Conclusion}
This work is the first example in which the conjecture by Pomeau and Manneville expressed in \eqref{Universality} 
is analytically proven to be true for countably infinite number of maps (the proposed Super Generalized Boole transformations).
This work shows the theoretical picture of stable-unstable transition for intermittent maps.
In the course of proof, we have shown that 
the Super Generalized Boole (SGB) transformations preserve the unique Cuachy distribution,
together with proving the fact that SGB transformations are \textit{exact} and 
that any initial density function converges to the invariant Cauchy distribution when the Condition A is satisfied.

Applying the property of exactness, one can obtain \textit{analytical} formulae of the Lyapunov exponents for the SGB transformations. 
In the SGB transformations, the Lyapunov exponents $\lambda_{K, \alpha}$ are equivalent to 
the Kolmogorov-Sinai entropy applying to the Pesin's theorem.
Using the analytical formulae of the Lyapunov exponents, we have confirmed that for $K=3,4$ and 5,
the derivative $\partial \lambda_{K, \alpha} /\partial \alpha$ diverge at the
critical points and we obtained $\nu_1=\nu_2=\nu_3 = \frac{1}{2}$.
Thus, we have proven the universality of the route to chaos for a large class of the chaotic systems.

As future works, clarifying the scaling relation between the critical exponent $\nu$ and the other critical exponents,
we can obtain a new perspective of chaos in physics.

\clearpage
\begin{acknowledgments}
	Ken-ichi Okubo acknowledges the support of
	Grant-in-Aid for JSPS Research Fellow Grant Number  JP17J07694.
\end{acknowledgments}

\clearpage
\appendix
\setcounter{equation}{0}
\def\theequation{S\arabic{equation}}
\begin{center}
	{\Large \textbf{Supplemental Material}}
\end{center}
In this Supplemental material it is shown that
\begin{enumerate}
	\item the Super Generalized Boole transformations preserve the Cauchy distribution for certain parameter ranges and the Cauchy distribution is determined uniquely,
	\item the SGB transformations are exact for the parameter ranges, and
	\item the critical exponent of Lyapunov exponent is $\nu=\frac{1}{2}$ 
	for all $K \in \mathbb{N}\backslash\{1\}$ when \textit{Type 1} intermittency occurs.
\end{enumerate}
The definitions of $F_K$,  $G_K$, $S_{K, \alpha}$ are written in the following.

\begin{definition}
	The map $F_{K}: \mathbb{R}\backslash A\to \mathbb{R}\backslash A$ is referred to as
	\begin{eqnarray}
	F_K(\cot\theta) \overset{\mathrm{def}}{=} \cot K\theta,
	\end{eqnarray}	
	where $K \in \mathbb{N}\backslash \{1\}$ and the set $A$ is a set on $\mathbb{R}$ such that for any point $x\in A$, $F_K(x)$ reaches a singular point.
\end{definition}

\begin{definition}
	The map $G_K : \mathbb{R} \to \mathbb{R}$ is referred to as
	\begin{eqnarray}
	G_K(\coth \theta) \overset{\mathrm{def}}{=} \coth K\theta,
	\end{eqnarray}
	where $K \in \mathbb{N}\backslash \{1\}$.
\end{definition}

\begin{definition}[Super Generalized Boole Transformation]
	Super Generalized Boole Transformation $S_{K, \alpha} : \mathbb{R}\backslash B \to \mathbb{R}\backslash B$ is referred to as
	\begin{eqnarray}
	x_{n+1} = S_{K, \alpha} (x_n) \overset{\mathrm{def}}{=} \alpha K F_K(x_n), \label{SDefinition: SGB}
	\end{eqnarray}
	where $\alpha>0$, $K\in \mathbb{N}\backslash\{1\}$ and the set $B$ is a set on $\mathbb{R}$ such that for any point $x$ on $B$, there is an integer $n$ where
	$S_{K, \alpha}^n x$ reaches a singular point.
\end{definition}
For example, $S_{3, \alpha}$, $S_{4, \alpha}$ and $S_{5, \alpha}$ are as follows.
\begin{eqnarray}
S_{3,\alpha}(x_n) &=& 3\alpha \frac{x_n^3 - 3x_n}{3x_n^2 -1}, \label{S_{3}}\\
S_{4, \alpha}(x_n) &=& 4\alpha \frac{x_n^4 -6x_n^2 +1}{4x_n^3-4x_n}, \label{S_{4}}\\
S_{5, \alpha}(x_n) &=& 5\alpha \frac{x_n^5-10x_n^3 +5x_n}{5x_n^4-10x_n^2 +1} \label{S_{5}}.
\end{eqnarray}
The derivative of $S_{K, \alpha}$ with respect to $x$ is denoted as
\begin{eqnarray*}
	S_{2N, \alpha}'(x) &=& (2N)^2\alpha \frac{(1+x^2)^{2N-1}}{\left[\displaystyle\sum_{r=0}^{N-1}(-1)^r {}_{2N} \mathrm{C} _{2r+1} x^{2N-2r-1}\right]^2}>0,\\
	S_{2N+1, \alpha}'(x) &=& (2N+1)^2\alpha \frac{(1+x^2)^{2N}}{\left[\displaystyle\sum_{r=0}^{N}(-1)^r {}_{2N+1}\mathrm{C}_{2r+1} x^{2(N-r)}\right]^2}>0.
\end{eqnarray*}

\section{Existence of the invariant density function}
In this section, we prove that
\begin{itemize}
	\item when $K=2N, N \in \mathbb{N}$ and $0<\alpha<1$, $\Longrightarrow$ the map $S_{2N, \alpha}$ preserves the Cauchy distribution and
	it can be determined uniquely, and
	\item when $K=2N+1, N \in \mathbb{N}$ and $\frac{1}{(2N+1)^2}<\alpha<1$, $\Longrightarrow$ the map $S_{2N+1, \alpha}$ preserves the Cauchy distribution and it can be determined uniquely.
\end{itemize}

According to \cite{Umeno16}, the map $S_{K, \alpha}$ is $K$ to one map as follows.
\begin{eqnarray*}
	y &=& K\alpha \cot K\theta = K\alpha F_K(x_j),  j=1,2,\cdots,K,\\
	x_j &=& \cot\left(\theta+j\frac{\pi}{K}\right), j=1,2,\cdots,K.
\end{eqnarray*}
If variables $\{x_j\}$ obey the Cauchy distribution 
\begin{equation*}
f(x) = \frac{1}{\pi}\frac{\gamma}{x^2 + \gamma^2},
\end{equation*}
then according to \cite{Umeno16} it holds that
\begin{eqnarray}
p(y) |dy|&=& f(x_1)|dx_1|+ f(x_2)|dx_2| + \cdots +  f(x_K)|dx_K|,\nonumber\\
p(y) &=& \frac{1}{\pi} \frac{\alpha K G_K(\gamma)}{\alpha^2K^2G_K^2(\gamma)+y^2}.
\end{eqnarray}
Then the  scale parameter $\gamma$ is transformed in one iteration as
\begin{eqnarray}
\gamma \longmapsto\alpha K G_K(\gamma). \label{scaling parameter equation}
\end{eqnarray}

\noindent  Now, for each $K$ let us obtain the fixed point $\gamma_{K, \alpha}$ which satisfies 
\begin{eqnarray}
\gamma_{K, \alpha} = \alpha K G_K(\gamma_{K, \alpha}). \label{Sinvariant density}
\end{eqnarray}
If we discover a real and positive solution of \eqref{Sinvariant density}, it is the evidence that the map $S_{K, \alpha}$
preserves the Cauchy distribution.

The map $G_{K}(x)$ is denoted as
\begin{eqnarray}
G_{2N}(x) &=&   \frac{\displaystyle\sum_{k=0}^{n} {}_{2N}C_{2k} x^{2(N-k)}}{\displaystyle\sum_{k=0}^{N-1} {}_{2N}C_{2k+1} x^{2(N-k)-1}},\label{G_2N}\\
G_{2N+1}(x) &=& \frac{\displaystyle\sum_{k=0}^{n} {}_{2N+1}C_{2k} x^{2(N-k)+1}}{\displaystyle\sum_{k=0}^{N} {}_{2N+1}C_{2k+1} x^{2(N-k)}} \label{G_2N+1}.
\end{eqnarray}
Since at $\alpha=\frac{1}{K}$, the SGB transformation $S_{K, \alpha}$ is equivalent to
$K$-angle formula of cot function, it is obvious that the fixed point of scaling parameter $\gamma_{K, \frac{1}{K}}$ is unity by simple calculation.
In the following, discuss the case that $\alpha \neq \frac{1}{K}$.
Such lemmas hold.

\begin{lemma}\label{lemma: K, unique alpha>1/K}
	For $\frac{1}{K}< \alpha <1$, fix $\alpha$ and there is only one solution which satisfies \eqref{Sinvariant density} in the range of $\gamma_{K,\alpha}>1$ and
	for $0<\alpha< \frac{1}{K}$ there is no solution in the range of $\gamma_{K, \alpha} >1$.
\end{lemma}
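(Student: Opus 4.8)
The plan is to bypass the polynomial forms \eqref{G_2N}--\eqref{G_2N+1} entirely and work with the defining relation $G_K(\coth\theta)=\coth K\theta$ directly. Writing $\gamma=\coth\theta$ sets up a strictly decreasing bijection between $\theta\in(0,\infty)$ and $\gamma\in(1,\infty)$, so the regime $\gamma_{K,\alpha}>1$ corresponds exactly to $\theta>0$. Under this substitution the fixed-point equation \eqref{Sinvariant density} becomes $\coth\theta=\alpha K\coth K\theta$, and dividing through by $\coth K\theta$ reduces the whole problem to locating the level sets of a single function,
\begin{equation}
\psi(\theta)\overset{\mathrm{def}}{=}\frac{\coth\theta}{\coth K\theta}=\frac{\tanh K\theta}{\tanh\theta},\qquad \psi(\theta)=\alpha K.
\end{equation}
Thus counting the positive solutions $\gamma>1$ of \eqref{Sinvariant density} is equivalent to counting the $\theta>0$ with $\psi(\theta)=\alpha K$.

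First I would record the two boundary limits. Using $\tanh t\sim t$ as $t\to0$ gives $\psi(0^+)=K$, and since $\tanh t\to1$ as $t\to\infty$ one gets $\psi(\infty)=1$. For $\frac{1}{K}<\alpha<1$ the target value $\alpha K$ lies in $(1,K)$, so at least one root exists by the intermediate value theorem, whereas for $0<\alpha<\frac{1}{K}$ one has $\alpha K<1$ and any root would force $\psi<1$, which cannot occur once $\psi$ is shown to stay above $1$. Hence both halves of the lemma follow from the single structural fact that $\psi$ is strictly decreasing on $(0,\infty)$.

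The heart of the argument — and the step I expect to be the main obstacle — is proving this monotonicity. I would compute the logarithmic derivative,
\begin{equation}
\frac{d}{d\theta}\log\psi(\theta)=\frac{K\,\mathrm{sech}^2 K\theta}{\tanh K\theta}-\frac{\mathrm{sech}^2\theta}{\tanh\theta}=\frac{2K}{\sinh 2K\theta}-\frac{2}{\sinh 2\theta},
\end{equation}
so that $\psi'<0$ is equivalent to the clean inequality $\sinh 2K\theta>K\sinh 2\theta$ for all $\theta>0$. This inequality I would settle by the auxiliary function $h(t)=\sinh Kt-K\sinh t$: one has $h(0)=0$ and $h'(t)=K(\cosh Kt-\cosh t)>0$ for $t>0$ and $K\ge2$, since $\cosh$ is strictly increasing on $(0,\infty)$ and $Kt>t$. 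Therefore $h(t)>0$ for $t>0$, giving the required inequality with $t=2\theta$ and hence $\psi'(\theta)<0$ throughout $(0,\infty)$.

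With strict monotonicity in hand the count is immediate: $\psi$ maps $(0,\infty)$ bijectively and decreasingly onto the open interval $(1,K)$, so $\psi(\theta)=\alpha K$ has exactly one root when $\alpha K\in(1,K)$ and none when $\alpha K\le1$. Translating back through $\gamma=\coth\theta$ yields precisely one solution of \eqref{Sinvariant density} with $\gamma_{K,\alpha}>1$ for $\frac{1}{K}<\alpha<1$ and no such solution for $0<\alpha<\frac{1}{K}$, which is the claim. The only point to watch is preserving the strictness of all inequalities (and noting that the degenerate endpoint $\alpha=\frac{1}{K}$ corresponds to $\theta\to\infty$, i.e. $\gamma\to1$, consistent with the already-observed $\gamma_{K,1/K}=1$), so that the conclusion is genuine uniqueness rather than mere existence.
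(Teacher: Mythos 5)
Your proof is correct, and while it starts from the same substitution $\gamma=\coth\theta$ as the paper, it then takes a genuinely cleaner route. The paper studies the $\alpha$-dependent difference $f(y)=\coth y-\alpha K\coth(Ky)$: it asserts $f'(y)<0$ for $\frac{1}{K}<\alpha<1$ (equation \eqref{F decrease}), computes the limits $f(0^+)=+\infty$ and $f(\infty)=1-\alpha K<0$ to get existence and uniqueness, and then needs a separate positivity argument $f(y)>0$ to rule out roots when $0<\alpha<\frac{1}{K}$. You instead divide, obtaining the $\alpha$-free ratio $\psi(\theta)=\tanh(K\theta)/\tanh\theta$, so that both halves of the lemma become level-set statements for a single strictly decreasing continuous function mapping $(0,\infty)$ onto $(1,K)$: exactly one root when $\alpha K\in(1,K)$, none when $\alpha K\le 1$. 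This buys two things. First, one monotonicity statement handles both parameter regimes at once, with no case split. Second, and more substantively, you actually prove the monotonicity, reducing it to $\sinh(2K\theta)>K\sinh(2\theta)$ and settling that by differentiating $h(t)=\sinh(Kt)-K\sinh t$. The paper's corresponding step is weaker: its claim $f'(y)<0$ is justified only by the remark $\coth y>\coth(Ky)$, which by itself is not sufficient, since
\begin{equation}
f'(y)=-\mathrm{csch}^2 y+\alpha K^2\,\mathrm{csch}^2(Ky),
\end{equation}
whose negativity requires precisely $\sinh^2(Ky)>\alpha K^2\sinh^2 y$ --- up to the harmless factor $\alpha<1$, exactly the inequality you establish. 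So your argument not only reaches the conclusion but supplies the missing inequality on which the paper's own monotonicity step implicitly relies; your endpoint remark ($\alpha=\frac{1}{K}$ corresponding to $\theta\to\infty$, $\gamma\to1$) is also consistent with the paper's separate treatment of that case.
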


\begin{proof}
	In \eqref{Sinvariant density}, change the variable from $\gamma_{K, \alpha}>1$ into $\coth y$. One has that
	\begin{equation}
	\coth y =\alpha K\coth(Ky).
	\end{equation} 
	A function $f(y)$ is defined to be
	\begin{eqnarray}
	f(y) = \coth y -\alpha K\coth(Ky).
	\end{eqnarray}
	In the range where $y\geq 0$, since a function $\coth y$ decreases monotonically, it holds that
	\begin{equation}
	\coth y > \coth(Ky).
	\end{equation}
	Then, if the condition $\frac{1}{K} < \alpha <1$ is satisfies, one has that
	\begin{eqnarray}
	f'(y) = (1-\coth^2 y) - \alpha K^2\left\lbrace 1-\coth^2(Ky)\right\rbrace<0 \label{F decrease}
	\end{eqnarray}
	Thus, the function $f(y)$ decreases monotonically. 
	Let us discuss the value at $y=0$. In the limit of $y \to +0$, it holds that from \eqref{G_2N+1},
	\begin{equation*}
	\lim_{y \to +0} \frac{\coth\left\lbrace Ky\right\rbrace }{\coth y} =
	\left\lbrace
	\begin{array}{lll}
	\displaystyle \lim_{y \to +0} \frac{{}_{2N}C_{0}\coth^{2N}y}{{}_{2N}C_{1}\coth^{2N}y} &=& \frac{1}{2N}, K=2N,\\
	\displaystyle \lim_{y \to +0} \frac{{}_{2N+1}C_{0}\coth^{2N}y}{{}_{2N+1}C_{1}\coth^{2N}y} 	&=& \frac{1}{2N+1}, K=2N+1.
	\end{array}
	\right.
	\end{equation*}
	Thus, one has that
	\begin{equation}
	\lim_{y \to +0} f(y) =
	\left\lbrace
	\begin{array}{lll}
	\displaystyle\lim_{y \to +0}\coth y \left(1-\alpha \frac{K}{2N}\right) &=&+\infty, K=2N,\\
	\displaystyle \lim_{y \to +0}\coth y \left(1-\alpha \frac{K}{2N+1}\right)&=&+\infty, K=2N+1
	\end{array}\label{F y =0}
	\right.
	\end{equation}
	One also has that
	\begin{eqnarray}
	\lim_{y \to \infty} f(y) &=& 1-\alpha K<0. \label{F y = infty}
	\end{eqnarray}
	Thus, from \eqref{F decrease}, \eqref{F y =0} and \eqref{F y = infty}, it is proven that there is only one solution that satisfies
	$f(y)=0$.
	Therefore, for $\frac{1}{K}<\alpha<1$, there is a solution which satisfies \eqref{Sinvariant density} in the rage of $\gamma_{K, \alpha}>1$.
	
	In the case of $0<\alpha< \frac{1}{K}$, it holds that for any $y > 0$
	\begin{equation}
	f(y)  > 0.
	\end{equation}
	Then, there is no solution which satisfies $\gamma_{K, \alpha}>1$.
\end{proof}

\begin{lemma}\label{lemma: K, unique K=2N alpha<1/K}
	In the case of $K=2N$ for $0< \alpha <\frac{1}{K}$, fix $\alpha$. There is only one solution which satisfies \eqref{Sinvariant density} in the range of 
	$0<\gamma_{K,\alpha}<1$ and for $\frac{1}{K} < \alpha$ there is no solution in the range of $0<\gamma_{K, \alpha}<1$.
\end{lemma}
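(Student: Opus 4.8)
The plan is to follow the proof of Lemma~\ref{lemma: K, unique alpha>1/K} verbatim in structure, changing only the substitution so that it covers the interval $0<\gamma_{K,\alpha}<1$. Since $\coth y>1$ for every $y>0$, the substitution $\gamma_{K,\alpha}=\coth y$ used there cannot reach $(0,1)$; instead I would set $\gamma_{K,\alpha}=\tanh y$ with $y>0$, which maps $(0,\infty)$ bijectively and increasingly onto $(0,1)$.

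First I would establish the change-of-variable identity for even $K$. Writing $\tanh y=\coth\!\left(y+\tfrac{i\pi}{2}\right)$ and combining the defining relation $G_K(\coth\theta)=\coth K\theta$ with the $i\pi$-periodicity of $\coth$ gives $G_{2N}(\tanh y)=\coth\!\left(Ky+iN\pi\right)=\coth(Ky)$. The evenness $K=2N$ is exactly what turns the imaginary shift $iK\pi/2=iN\pi$ into an integer multiple of the period $i\pi$; for odd $K$ the same computation would instead yield $\tanh(Ky)$, which is precisely why this lemma is stated only for $K=2N$. The identity can equivalently be obtained directly from \eqref{G_2N} by checking the reciprocal symmetry $G_{2N}(1/x)=G_{2N}(x)$.

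With this identity, \eqref{Sinvariant density} reduces to $\tanh y=\alpha K\coth(Ky)$, so I would analyze $g(y)=\tanh y-\alpha K\coth(Ky)$ on $(0,\infty)$. Differentiating yields $g'(y)=(1-\tanh^2 y)+\alpha K^2\bigl(\coth^2(Ky)-1\bigr)>0$, so $g$ is strictly increasing; the boundary values are $\lim_{y\to+0}g(y)=-\infty$ (because $\alpha K\coth(Ky)\sim\alpha/y$ while $\tanh y\to0$) and $\lim_{y\to\infty}g(y)=1-\alpha K$. This parallels Lemma~\ref{lemma: K, unique alpha>1/K} with opposite monotonicity and opposite behaviour near the origin, which is what flips the role of the threshold $\alpha=\tfrac1K$.

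Finally I would conclude by the intermediate value theorem together with strict monotonicity: for $0<\alpha<\tfrac1K$ one has $g(\infty)=1-\alpha K>0$, so $g$ vanishes exactly once and yields a unique $\gamma_{K,\alpha}=\tanh y\in(0,1)$, whereas for $\alpha>\tfrac1K$ one has $g(\infty)<0$ and, since $g$ increases, $g<0$ throughout $(0,\infty)$, so no solution exists in $(0,1)$. The hard part will be the change-of-variable identity $G_{2N}(\tanh y)=\coth(Ky)$ of the second step; once it is in place the monotonicity and limit computations are routine and run exactly parallel to Lemma~\ref{lemma: K, unique alpha>1/K}.
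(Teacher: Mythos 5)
Your proposal is correct and follows essentially the same route as the paper's own proof: the substitution $\gamma_{K,\alpha}=\tanh y$, the strictly increasing function $g(y)=\tanh y-\alpha K\coth(Ky)$ (the paper's $h_{2N}$), the boundary values $-\infty$ and $1-\alpha K$, and the intermediate value argument are all identical. The only difference is that you explicitly justify the identity $G_{2N}(\tanh y)=\coth(2Ny)$ (via the $i\pi$-shift or the reciprocal symmetry $G_{2N}(1/x)=G_{2N}(x)$), a step the paper leaves implicit.
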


\begin{proof}
	In \eqref{Sinvariant density}, change the variable from $0<\gamma_{K, \alpha}<1$ into $\tanh y$. One has that
	\begin{equation}
	\tanh y = \frac{\alpha (2N)}{\tanh(2Ny)}.
	\end{equation}
	A function $h_{2N}(y)$ is defined to be
	\begin{equation}
	h_{2N}(y) = \tanh y - \frac{\alpha (2N)}{\tanh(2Ny)}.
	\end{equation}
	The derivative of $h_{2N}(y)$ is denoted as
	\begin{equation}
	h_{2N}'(y) = 1-\tanh^2 y + \alpha (2N)^2 \frac{1-\tanh^2(2Ny)}{\tanh^2(2Ny)}>0 \label{H increase}.
	\end{equation}
	One has that
	\begin{equation}
	\begin{array}{lll}
	h_{2N}(0) &=& -\infty<0,\\
	h_{2N}(\infty) &=& 1-\alpha K >0,~\mbox{for}~0<\alpha<\frac{1}{K}. 
	\end{array} \label{H y=0 infinity}
	\end{equation}
	From \eqref{H increase} and \eqref{H y=0 infinity}, one sees that in the case of $K=2N$ for fixed $\alpha$
	which satisfied $0<\alpha < \frac{1}{K}$, there is only one solution 
	which satisfies \eqref{Sinvariant density} in the range of 
	$0<\gamma_{K,\alpha}<1$. 
	
	In the case of $\frac{1}{2N} < \alpha$, it holds that for all $y>0$
	\begin{equation}
	h_{2N}(y) < 0.
	\end{equation}
	Therefore, there is no solution which satisfies $0<\gamma_{K, \alpha}<1$.
\end{proof}

\begin{lemma}\label{lemma: K, unique K=2N+1 alpha<1/K}
	In the case of $K=2N+1$ for $\frac{1}{K^2}< \alpha <\frac{1}{K}$, fix $\alpha$. There is only one solution which satisfies \eqref{Sinvariant density} in the range of 
	$0<\gamma_{K,\alpha}<1$ and for $\frac{1}{K}< \alpha$, there is no solution in the range of $0<\gamma_{K, \alpha}<1$.
\end{lemma}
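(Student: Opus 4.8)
The plan is to follow the same change of variable as in Lemmas~\ref{lemma: K, unique alpha>1/K} and~\ref{lemma: K, unique K=2N alpha<1/K}, but to handle a parity effect that makes the odd case genuinely different from the even one. Since we seek a solution with $0<\gamma_{K,\alpha}<1$, I would set $\gamma_{K,\alpha}=\tanh y$ with $y>0$, which is a monotone bijection onto $(0,1)$. The required functional identity for odd index is $G_{2N+1}(\tanh y)=\tanh\!\left((2N+1)y\right)$, which follows from \eqref{G_2N+1} (equivalently from $\tanh y=\coth(y+i\pi/2)$, the $i\pi$-periodicity of $\coth$, and $\coth(z+i\pi/2)=\tanh z$). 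With $K=2N+1$, equation \eqref{Sinvariant density} then reduces to
\[
\tanh y=\alpha K\tanh(Ky).
\]

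Here is the main obstacle. In the even case the analogous reduction produced $\coth(Ky)$ on the right, yielding the manifestly monotone $h_{2N}$; for odd $K$ one instead gets $\tanh(Ky)$, and the natural function $h_{2N+1}(y)=\tanh y-\alpha K\tanh(Ky)$ is \emph{not} monotone. Indeed $h_{2N+1}(0)=0$ with $h_{2N+1}'(0)=1-\alpha K^2<0$ for $\alpha>1/K^2$, yet $h_{2N+1}'(y)=\operatorname{sech}^2 y-\alpha K^2\operatorname{sech}^2(Ky)>0$ for large $y$ because $\operatorname{sech}^2(Ky)$ decays faster. Thus the sign-of-derivative argument of the previous lemmas cannot establish uniqueness directly.

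To circumvent this I would divide by $\tanh y>0$ and study the ratio $g(y)=\tanh(Ky)/\tanh y$, so that the equation reads $g(y)=1/(\alpha K)$. The endpoint values are immediate, $g(0^+)=K$ and $g(\infty)=1$, and the crux is that $g$ is strictly decreasing on $(0,\infty)$. I would obtain this from
\[
\frac{d}{dy}\log g(y)=\frac{2K}{\sinh(2Ky)}-\frac{2}{\sinh(2y)},
\]
whose negativity is equivalent to $K\sinh(2y)<\sinh(2Ky)$, i.e.\ to the elementary inequality $\sinh(Ku)>K\sinh u$ for $u>0$ and $K\ge 2$. The latter follows since the difference vanishes at $u=0$ and has derivative $K\!\left[\cosh(Ku)-\cosh u\right]>0$.

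With $g$ strictly decreasing from $K$ down to $1$, the conclusion reduces to locating the target $1/(\alpha K)$. For $\frac{1}{K^2}<\alpha<\frac{1}{K}$ one has $1/(\alpha K)\in(1,K)$, which is attained exactly once, giving a unique $y>0$ and hence a unique $\gamma_{K,\alpha}\in(0,1)$. For $\alpha>\frac{1}{K}$ one has $1/(\alpha K)<1<g(y)$ for every $y>0$, so no solution exists in $0<\gamma_{K,\alpha}<1$, as claimed. The only delicate point is the strict monotonicity of $g$, which the $\sinh$ inequality settles cleanly.
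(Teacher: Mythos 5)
Your proof is correct, and it departs from the paper's precisely at the step you identified as the obstacle. The paper makes the same substitution $\gamma_{K,\alpha}=\tanh y$ and reduces \eqref{Sinvariant density} to the same equation $\tanh y=\alpha K\tanh(Ky)$ (its first display misprints the coefficient as $\alpha K^{2}$, but the function it actually studies, $h_{2N+1}(y)=\tanh y-\alpha(2N+1)\tanh((2N+1)y)$, carries the correct factor $\alpha K$). It then confronts the difference function head-on: $h_{2N+1}(0)=0$, $h_{2N+1}'(0)=1-\alpha(2N+1)^{2}<0$, $h_{2N+1}(\infty)=1-\alpha(2N+1)>0$, and it argues that $h_{2N+1}'$ changes sign exactly once, from minus to plus, by factoring $h_{2N+1}'$ around a monotone ratio of hyperbolic functions; the unique positive zero $y_{**}$ then follows from this decrease-then-increase shape. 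In other words, the paper does not remove the non-monotonicity you flagged but tames it with a one-sign-change argument. (As an aside, the paper's displayed factorization of $h_{2N+1}'$ in terms of $\sinh^{-2}$ is algebraically off: differentiating $\tanh$ produces $\cosh^{-2}$ terms, and the displayed form would give $h_{2N+1}'(0^{+})>0$, contradicting its own computation $h_{2N+1}'(0)<0$; the argument is repairable by noting that the sign of $h_{2N+1}'$ is that of $\cosh(Ky)/\cosh y-K\sqrt{\alpha}$, where $\cosh(Ky)/\cosh y$ increases from $1$ to $\infty$, so exactly one sign change occurs when $\alpha K^{2}>1$.) Your route of dividing by $\tanh y$ and proving that $g(y)=\tanh(Ky)/\tanh y$ decreases strictly from $K$ to $1$ via the inequality $\sinh(Ku)>K\sinh u$ is cleaner and buys more: existence, uniqueness, and the non-existence claim for $\alpha>1/K$ all drop out simultaneously from strict monotonicity and the intermediate value theorem, and it makes transparent that solutions in $(0,1)$ exist exactly when $1/(\alpha K)\in(1,K)$, i.e.\ $1/K^{2}<\alpha<1/K$. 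What the paper's approach buys instead is formal parallelism with Lemmas~\ref{lemma: K, unique alpha>1/K} and~\ref{lemma: K, unique K=2N alpha<1/K}, which are likewise phrased as sign analyses of a difference function.
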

\begin{proof}
	In \eqref{Sinvariant density}, change the variable from $0<\gamma_{K, \alpha}<1$ into $\tanh y$. One has that
	\begin{equation}
	\tanh y = \alpha(2N+1)^2\tanh\left\lbrace (2N+1)y\right\rbrace.
	\end{equation}
	A function $h_{2N+1}(y)$ is defined to be
	\begin{equation}
	h_{2N+1}(y) = \tanh y - \alpha(2N+1)\tanh\left\lbrace (2N+1)y\right\rbrace.
	\end{equation}
	It holds that
	\begin{equation}
	\begin{array}{lll}
	h_{2N+1}(0) &=& 0,\\
	h_{2N+1}(\infty) &=& 1- \alpha (2N+1)>0, ~\mbox{for}~ \frac{1}{(2N+1)^2} < \alpha < \frac{1}{(2N+1)}.
	\end{array}
	\end{equation}
	The derivative of $h_{2N+1}(y)$ is 
	\begin{equation}
	\begin{array}{lll}
	h_{2N+1}'(y) &=& 1-\alpha(2N+1)^2 + \alpha(2N+1)^2\tanh^2\left\lbrace (2N+1)y\right\rbrace-\tanh^2y,\\
	h_{2N+1}'(0) &=& 1-\alpha(2N+1)^2 < 0,~\mbox{for}~ \frac{1}{(2N+1)^2}<\alpha< \frac{1}{2N+1}.
	\end{array}
	\end{equation}
	The derivative $h_{2N+1}'(y)$ is also expressed using $\sinh$ functions as follows
	\begin{equation}
	h_{2N+1}'(y) = \frac{\sinh^2\left\lbrace(2N+1)y\right\rbrace\left[1-\alpha(2N+1)\frac{\sinh^2y}{\sinh^2\left\lbrace(2N+1)y\right\rbrace}\right]}{\sinh^2y\sinh^2\left\lbrace(2N+1)y\right\rbrace}.
	\end{equation}
	The function $J(y)=\frac{\sinh y}{\sinh\left\lbrace(2N+1)y\right\rbrace}$ decreases monotonously since for the derivative
	\begin{eqnarray}
	J'(y) &=& \frac{
		-\sinh(2Ny) -n \left[\sinh \left\lbrace (2N+2)y\right\rbrace -\sinh(2Ny)\right]
	}{
	\sinh^2\left\lbrace(2N+1)y\right\rbrace
},
\end{eqnarray}
the numerator is denoted as
\begin{eqnarray}
n \sinh\left\lbrace (2N+2)y\right\rbrace ) \left[\frac{n-1}{n}\frac{\sinh(2Ny)}{\sinh(2N+2)y}-1\right]<0.
\end{eqnarray}
Then $J'(y)<0$.
Considering the fact that 
\begin{equation}
\lim_{y \to  \infty} \frac{\sinh(y)}{\sinh(2N+1)y}  = 0,
\end{equation}
a part of $h_{2N+1}'(y)$, $\left[1-\alpha(2N+1)\frac{\sinh^2y}{\sinh^2\left\lbrace(2N+1)y\right\rbrace}\right]$ increases monotonously and
there is a unique point $y_*$ at which the sign of $h_{2N+1}'(y)$ changes from minus to plus.
Therefore, there is a unique point 
$0<y_{**}<\infty$ at which it holds that $h_{2N+1}(y_{**})=0$. 

\noindent From above discussion, one sees that in the case of $K=2N+1$ and $\frac{1}{K^2}<\alpha<\frac{1}{K}$, there is only one solution 
which satisfies \eqref{Sinvariant density} in the range of 
$0<\gamma_{K,\alpha}<1$. 

In the case of $\frac{1}{K}< \alpha$, since it holds that for all $y>0$,
\begin{equation}
h_{2N+1}(y) <0,
\end{equation}
there is no solution in the range of $0<\gamma_{K, \alpha} < 1$.
\end{proof}

From the Lemmas 4, 5 and 6, such lemmas hold.
\begin{lemma}\label{Lemma: K=2N}
	Consider the case of $K=2N$.
	
	\noindent For $\frac{1}{K} \leq \alpha<1$, there is a unique solution of \eqref{Sinvariant density} and
	the solution $\gamma_{K, \alpha}$ is in the range of $\gamma_{K, \alpha}\geq1$.
	For $0 < \alpha < \frac{1}{K}$, there is a unique solution of \eqref{Sinvariant density} and 
	the solution $\gamma_{K, \alpha}$ is in the range of $0<\gamma_{K, \alpha}<1$.
\end{lemma}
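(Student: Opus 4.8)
The plan is to decompose the positive half-line $\gamma_{K,\alpha}>0$ into the three disjoint regions $\gamma_{K,\alpha}>1$, $\gamma_{K,\alpha}=1$, and $0<\gamma_{K,\alpha}<1$, to count the solutions of the fixed-point equation \eqref{Sinvariant density} in each region using the preceding lemmas, and then to assemble the counts across the three ranges of $\alpha$. The counts in the two open regions are already supplied: Lemma \ref{lemma: K, unique alpha>1/K} controls $\gamma_{K,\alpha}>1$ (one solution when $\frac{1}{K}<\alpha<1$, none when $0<\alpha<\frac{1}{K}$), while Lemma \ref{lemma: K, unique K=2N alpha<1/K}, specialized to $K=2N$, controls $0<\gamma_{K,\alpha}<1$ (one solution when $0<\alpha<\frac{1}{K}$, none when $\alpha>\frac{1}{K}$). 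So the only region that must be examined directly is the single point $\gamma_{K,\alpha}=1$.

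First I would record that $G_K(1)=1$: writing $1=\coth\theta$ forces $\theta\to\infty$, whence $\coth K\theta\to 1$, so $G_K(1)=1$. Substituting into \eqref{Sinvariant density} shows that $\gamma_{K,\alpha}=1$ solves the equation if and only if $\alpha K=1$, i.e.\ exactly at $\alpha=\frac{1}{K}$; this is the ``simple calculation'' already invoked above to identify $\gamma_{K,\frac{1}{K}}=1$.

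With these three ingredients the synthesis reduces to bookkeeping. For $\frac{1}{K}<\alpha<1$, Lemma \ref{lemma: K, unique alpha>1/K} gives exactly one root in $\gamma_{K,\alpha}>1$, the point $\gamma_{K,\alpha}=1$ is excluded since $\alpha\neq\frac{1}{K}$, and Lemma \ref{lemma: K, unique K=2N alpha<1/K} gives no root in $0<\gamma_{K,\alpha}<1$; hence the root is unique and satisfies $\gamma_{K,\alpha}>1$. For $0<\alpha<\frac{1}{K}$ the two lemmas reverse roles: no root above $1$, none at $1$, and exactly one root in $0<\gamma_{K,\alpha}<1$, giving uniqueness with $0<\gamma_{K,\alpha}<1$. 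Together with the boundary case treated below, this yields the two assertions, the range $\frac{1}{K}\le\alpha<1$ collecting the solutions with $\gamma_{K,\alpha}\ge 1$.

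The boundary value $\alpha=\frac{1}{K}$ is the one place that needs care, since Lemmas \ref{lemma: K, unique alpha>1/K} and \ref{lemma: K, unique K=2N alpha<1/K} are phrased with strict inequalities and do not cover it. Here $\gamma_{K,\alpha}=1$ is a root by the computation above; to see it is the \emph{only} root I would revisit the monotone functions $f$ and $h_{2N}$ from the proofs of those lemmas. Their strict monotonicity persists at $\alpha=\frac{1}{K}$ (the derivative estimates $f'<0$ and $h_{2N}'>0$ hold throughout $0<\alpha<1$), while the endpoint values degenerate to $\lim_{y\to\infty}f(y)=1-\alpha K=0$ and $\lim_{y\to\infty}h_{2N}(y)=1-\alpha K=0$. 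Consequently $f(y)>0$ and $h_{2N}(y)<0$ for every finite $y>0$, so neither open region contains a root and $\gamma_{K,\alpha}=1$ is the unique solution, giving $\gamma_{K,\alpha}\ge 1$ as claimed. I expect this boundary analysis to be the only nonroutine step; everything else is a finite case count built on the three lemmas.
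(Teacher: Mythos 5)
Your proposal is correct and follows essentially the same route as the paper: the paper obtains this lemma by simply combining Lemma \ref{lemma: K, unique alpha>1/K}, Lemma \ref{lemma: K, unique K=2N alpha<1/K}, and the prior observation that $\gamma_{K,\frac{1}{K}}=1$. If anything, your write-up is more complete than the paper's, which asserts the result ``from the Lemmas 4, 5 and 6'' without spelling out that $\gamma_{K,\alpha}=1$ solves \eqref{Sinvariant density} only when $\alpha=\frac{1}{K}$, nor that uniqueness persists at the boundary $\alpha=\frac{1}{K}$ (not covered by the strict-inequality hypotheses of those lemmas) --- the point you settle by noting the monotone functions $f$ and $h_{2N}$ have degenerate endpoint value $1-\alpha K=0$ there, so neither open region contains a root.
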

\begin{lemma}\label{Lemma: K=2N+1}
	Consider the case of $K=2N+1$.
	
	\noindent For $\frac{1}{K} \leq \alpha<1$, there is a unique solution of \eqref{Sinvariant density} and
	the solution $\gamma_{K, \alpha}$ is in the range of $\gamma_{K, \alpha}\geq1$.
	For $\frac{1}{K^2} < \alpha <\frac{1}{K}$, there is a unique solution of \eqref{Sinvariant density} and 
	the solution $\gamma_{K, \alpha}$ is in the range of $0<\gamma_{K, \alpha}<1$.
\end{lemma}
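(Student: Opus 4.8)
The plan is to deduce this statement for odd $K=2N+1$ by assembling the two relevant building blocks already proven, namely Lemma~\ref{lemma: K, unique alpha>1/K} (which controls the region $\gamma_{K,\alpha}>1$ for every $K$) and Lemma~\ref{lemma: K, unique K=2N+1 alpha<1/K} (which controls the region $0<\gamma_{K,\alpha}<1$ for $K=2N+1$), and then stitching them together into a single global uniqueness statement on $(0,\infty)$. The key observation that makes the stitching clean is that the positive half-line splits as $(0,1)\cup\{1\}\cup(1,\infty)$, so it suffices to treat the three pieces separately and verify that exactly one of them contains a root for each admissible $\alpha$.

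First I would dispose of the boundary value $\alpha=\frac{1}{K}$ and the single point $\gamma=1$ simultaneously. Evaluating \eqref{Sinvariant density} at $\gamma=1$ and using the explicit form \eqref{G_2N+1}, one has $G_K(1)=1$ because the sum of the even-indexed binomial coefficients equals the sum of the odd-indexed ones ($=2^{2N}$); hence $\gamma=1$ solves $\gamma=\alpha K\,G_K(\gamma)$ if and only if $\alpha K=1$, i.e.\ $\alpha=\frac{1}{K}$. This matches the remark preceding the lemmas that at $\alpha=\frac{1}{K}$ the map $S_{K,\alpha}$ is exactly the $K$-angle formula of $\cot$, whose scale fixed point is $\gamma_{K,1/K}=1$. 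Thus for every $\alpha\neq\frac{1}{K}$ the value $\gamma=1$ is not a root, and the search reduces to the two open regions.

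With these preliminaries I would finish by case analysis on $\alpha$. For $\frac{1}{K}<\alpha<1$: Lemma~\ref{lemma: K, unique alpha>1/K} supplies a unique root in $(1,\infty)$, the exclusion clause of Lemma~\ref{lemma: K, unique K=2N+1 alpha<1/K} (valid for $\alpha>\frac{1}{K}$) rules out any root in $(0,1)$, and $\gamma=1$ is not a root; hence the solution is unique and lies in $(1,\infty)\subset[1,\infty)$. For $\frac{1}{K^2}<\alpha<\frac{1}{K}$: Lemma~\ref{lemma: K, unique K=2N+1 alpha<1/K} supplies a unique root in $(0,1)$, the exclusion clause of Lemma~\ref{lemma: K, unique alpha>1/K} (valid for $\alpha<\frac{1}{K}$) rules out any root in $(1,\infty)$, and $\gamma=1$ is again not a root; hence the unique solution lies in $(0,1)$. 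Adjoining the endpoint case $\alpha=\frac{1}{K}$ (unique root $\gamma=1$) yields the stated conclusion $\gamma_{K,\alpha}\geq1$ on $\frac{1}{K}\leq\alpha<1$ and $0<\gamma_{K,\alpha}<1$ on $\frac{1}{K^2}<\alpha<\frac{1}{K}$.

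Since the hard monotonicity analysis has already been carried out inside Lemmas~\ref{lemma: K, unique alpha>1/K} and \ref{lemma: K, unique K=2N+1 alpha<1/K}, I do not expect a genuine obstacle here; the only point requiring care is bookkeeping---checking that the three regions $(0,1)$, $\{1\}$ and $(1,\infty)$ tile $(0,\infty)$ with no gap and no overlap, so that the locally unique roots supplied by the two lemmas combine into a single globally unique $\gamma_{K,\alpha}$ rather than two coexisting solutions. Verifying $G_K(1)=1$ to eliminate the seam $\gamma=1$ is the small computation that guarantees this.
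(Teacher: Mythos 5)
Your proposal is correct and takes essentially the same route as the paper: the paper derives this lemma directly from Lemma~\ref{lemma: K, unique alpha>1/K} and Lemma~\ref{lemma: K, unique K=2N+1 alpha<1/K}, with the boundary case $\alpha=\frac{1}{K}$ settled by the preceding remark that the fixed scale parameter there is unity. You simply make explicit the bookkeeping the paper leaves implicit (the computation $G_K(1)=1$, the seam at $\gamma=1$, and the use of the exclusion clauses to get global uniqueness), which is a faithful elaboration rather than a different argument.
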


The Condition A is defined as follows
\begin{definition}
	Condition A is referred to as
	\begin{equation}
	\mbox{Condition A} : ~\left\lbrace
	\begin{array}{clcclr}
	\mbox{in the case of}& K=2N, &\alpha &\mbox{satisfies} & 0<\alpha<1 &\mbox{and}\\
	\mbox{in the case of}& K=2N+1, &\alpha &\mbox{satisfies} &  \frac{1}{K^2}< \alpha <1.
	\end{array}
	\right.
	\end{equation}
\end{definition}
From Lemmas \ref{Lemma: K=2N} and \ref{Lemma: K=2N+1}, such theorem holds.
\setcounter{theorem}{0}
\renewcommand{\thetheorem}{\Alph{theorem}}
\begin{theorem}
	When the Condition A is satisfied, the SGB transformations $\{S_{K, \alpha}\}$ preserve the Cauchy distribution and 
	the scale parameter can be chosen uniquely.
\end{theorem}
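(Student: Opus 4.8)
The plan is to reduce the statement entirely to a fixed-point analysis of the one-dimensional scale-parameter map and then assemble the case-by-case lemmas already established. By the pushforward rule of \cite{Umeno16} recalled above, the Cauchy family is closed under $S_{K,\alpha}$: a Cauchy density with scale parameter $\gamma$ is sent to a Cauchy density with scale parameter $\alpha K G_K(\gamma)$, so the one-step action on scale parameters is exactly the map \eqref{scaling parameter equation}. Consequently a Cauchy density is invariant if and only if its scale parameter solves the fixed-point equation \eqref{Sinvariant density}, and since distinct positive scale parameters give distinct densities, the invariant Cauchy density is unique precisely when the positive root of \eqref{Sinvariant density} is unique. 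Thus the whole theorem follows once I show that for every $(K,\alpha)$ allowed by the Condition A, equation \eqref{Sinvariant density} has exactly one root $\gamma_{K,\alpha}\in(0,\infty)$.

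First I would dispose of the scale $\gamma=1$. Using $G_K(1)=1$, equation \eqref{Sinvariant density} with $\gamma=1$ forces $\alpha=1/K$, and conversely at $\alpha=1/K$ the map $S_{K,1/K}$ is the pure $K$-angle cotangent formula, for which $\gamma=1$ is the fixed scale; this is the degenerate case separating $\gamma>1$ from $\gamma<1$. For the remaining $\alpha$ I would split according to the sign of $\gamma-1$ and uniformize the hyperbolic structure of $G_K$ by a hyperbolic substitution. On $\gamma>1$, writing $\gamma=\coth y$ turns \eqref{Sinvariant density} into $\coth y=\alpha K\coth(Ky)$, and I would study $f(y)=\coth y-\alpha K\coth(Ky)$: its monotonicity together with the boundary values at $y\to 0^+$ and $y\to\infty$ (where $f\to 1-\alpha K$) pins down exactly one root when $1/K<\alpha<1$ and none when $0<\alpha<1/K$, which is Lemma~\ref{lemma: K, unique alpha>1/K}. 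On $0<\gamma<1$, writing $\gamma=\tanh y$ I would analyze the corresponding functions $h_{2N}$ and $h_{2N+1}$ and, from their boundary values and the sign of their derivatives, obtain exactly one root in the complementary $\alpha$-range and none otherwise, namely Lemmas~\ref{lemma: K, unique K=2N alpha<1/K} and~\ref{lemma: K, unique K=2N+1 alpha<1/K}. Combining the $\gamma>1$ and $\gamma<1$ analyses yields, for each parity, a single positive root throughout the admissible range, which is precisely Lemmas~\ref{Lemma: K=2N} and~\ref{Lemma: K=2N+1}.

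Finally I would observe that the Condition A is exactly the union of these ranges: for $K=2N$ it is $0<\alpha<1=(0,1/K)\cup\{1/K\}\cup(1/K,1)$, and for $K=2N+1$ it is $1/K^2<\alpha<1$, both covered by Lemmas~\ref{Lemma: K=2N} and~\ref{Lemma: K=2N+1}. In every case there is one and only one $\gamma_{K,\alpha}\in(0,\infty)$, so the invariant Cauchy density exists and its scale is uniquely determined, which proves the theorem. The step I expect to be the main obstacle is uniqueness in the odd case on $0<\gamma<1$: there one has $h_{2N+1}(0)=0$ and $h_{2N+1}'(0)<0$, so monotonicity of $h_{2N+1}$ alone is unavailable. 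The delicate point is to show that $h_{2N+1}'$ changes sign exactly once, which I would establish by proving that the auxiliary ratio $J(y)=\sinh y/\sinh\{(2N+1)y\}$ is strictly decreasing (via the hyperbolic addition identities, so that the bracketed factor in $h_{2N+1}'$ is monotone) and then using $J(\infty)=0$ to locate the single sign change; this guarantees a unique interior zero of $h_{2N+1}$, hence a unique $\gamma_{K,\alpha}\in(0,1)$.
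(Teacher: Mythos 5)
Your proposal is correct and follows essentially the same route as the paper's own proof: reduction to the fixed-point equation $\gamma = \alpha K G_K(\gamma)$, the degenerate case $\alpha = 1/K$ with $\gamma = 1$, the substitutions $\gamma = \coth y$ on $\gamma > 1$ and $\gamma = \tanh y$ on $0 < \gamma < 1$, and the same case lemmas, including the resolution of the odd-$K$ subtlety via the monotone decrease of $J(y) = \sinh y/\sinh\{(2N+1)y\}$ to get a single sign change of $h_{2N+1}'$. No gaps to report.
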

\setcounter{theorem}{9}
\renewcommand{\thetheorem}{\arabic{theorem}}

\section{Exactness}
\noindent According to \cite{Mackey,Lasota,Schwegler}, the exactness is defined as follows.
\begin{definition}[Exactness]
	A dynamics $T$ on a phase space $\mathcal{X}$ with transfer operator $\mathcal{P}_{T}$ and 
	unique stationary density $f_*$ is called to be exact if and only if
	\begin{equation}
	\lim_{n \to \infty} \|\mathcal{P}_{T}^n f -f_*\|_{L^1} =0
	\end{equation}
	for every initial density $f \in \mathcal{D}$ where $\mathcal{D}$ denotes all densities on $\mathcal{X}$.
	
	\noindent This definition is equivalent to as follows,
	\begin{equation}
	\lim_{n \to \infty} \mu_*(T^n s) = 1,~ {}^\forall s \in \mathcal{B}, ~\mu_*(s) >0.
	\end{equation}
	where $\mathcal{B}$ denotes the $\sigma$-algebra and $\mu_*$ denotes the invariant measure
	corresponding the invariant density $f_*$.
\end{definition}
\begin{theorem}\label{SExact}
	If the the Condition A is satisfied, the SGB transformations $\{S_{K, \alpha}\}$ are exact.
\end{theorem}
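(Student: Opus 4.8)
The plan is to exploit the complex-analytic structure behind the Cauchy-preservation already established in Theorem~A. First I would extend $S_{K,\alpha}$ to a holomorphic self-map of the open upper half-plane $\mathbb{H}=\{w\in\mathbb{C}:\mathrm{Im}\,w>0\}$. Writing $z=\cot\theta=i(u+1)/(u-1)$ with $u=(z+i)/(z-i)$, the Cayley transform $z\mapsto u$ sends $\mathbb{H}$ onto the exterior of the unit disk, and in this coordinate $F_K$ becomes $u\mapsto u^{K}$ followed by the inverse Cayley map. Since $|u|>1$ forces $|u^{K}|>1$, this shows $F_K(\mathbb{H})\subseteq\mathbb{H}$, and multiplication by $\alpha K>0$ preserves $\mathbb{H}$; the only poles occur at $u^{K}=1$, i.e.\ on $\mathbb{R}$, so $S_{K,\alpha}$ is holomorphic on $\mathbb{H}$. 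Thus $S_{K,\alpha}$ is a degree-$K$ inner function of $\mathbb{H}$, and in particular not an automorphism because $K\geq 2$.

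The second ingredient is the pushforward identity for the transfer operator acting on Poisson kernels. Writing $P_w(x)=\frac{1}{\pi}\,\frac{\mathrm{Im}\,w}{|x-w|^2}$ for $w\in\mathbb{H}$ (so that $P_{i\gamma}$ is the centred Cauchy density of scale $\gamma$), the classical fact that an inner function pushes harmonic measure $\omega_w$ forward to $\omega_{S(w)}$ gives $\mathcal{P}_{S_{K,\alpha}}P_w=P_{S_{K,\alpha}(w)}$ for every $w\in\mathbb{H}$. The computation underlying Theorem~A is exactly the restriction of this identity to the imaginary axis: $F_K(i\gamma)=i\,G_K(\gamma)$, so $S_{K,\alpha}(i\gamma)=i\,\alpha K G_K(\gamma)$, and the invariant scale $\gamma_{K,\alpha}$ of Theorem~A is precisely the interior fixed point $w_*=i\gamma_{K,\alpha}\in\mathbb{H}$ of the self-map $w\mapsto S_{K,\alpha}(w)$ (which lies in $\mathbb{H}$ because $0<\gamma_{K,\alpha}<\infty$ under Condition~A).

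With these in hand, exactness follows from the Denjoy--Wolff theorem. A holomorphic self-map of $\mathbb{H}$ with an interior fixed point is a non-expansion of the hyperbolic metric, and since $S_{K,\alpha}$ is not an automorphism the Schwarz lemma at $w_*$ gives $|S_{K,\alpha}'(w_*)|<1$, whence $S_{K,\alpha}^{n}(w)\to w_*$ for every $w\in\mathbb{H}$. Combined with the continuity of $w\mapsto P_w$ from $\mathbb{H}$ into $L^1(\mathbb{R})$, this yields $\mathcal{P}_{S_{K,\alpha}}^{n}P_w=P_{S_{K,\alpha}^{n}(w)}\to P_{w_*}=f_*$ in $L^1$ for every $w$. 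To upgrade this to every initial density $f\in\mathcal{D}$, I would use that finite convex combinations of Poisson kernels are $L^1$-dense in $\mathcal{D}$ (the kernels $P_{x+i\varepsilon}$ form an approximate identity as $\varepsilon\to 0$, and their mixtures approximate any density). For a finite combination the limit is immediate by linearity, and for general $f$ one closes the argument with the $L^1$-contractivity of $\mathcal{P}_{S_{K,\alpha}}$: approximating $f$ by such a combination $h$ gives $\|\mathcal{P}_{S_{K,\alpha}}^{n}f-f_*\|_{L^1}\leq\|f-h\|_{L^1}+\|\mathcal{P}_{S_{K,\alpha}}^{n}h-f_*\|_{L^1}$, which can be made arbitrarily small.

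The main obstacle is the second ingredient: establishing $\mathcal{P}_{S_{K,\alpha}}P_w=P_{S_{K,\alpha}(w)}$ for all $w\in\mathbb{H}$, not merely on the imaginary axis covered by Theorem~A. This requires the full inner-function property together with a clean change of variables over the $K$ preimage branches; the Cayley/$u^{K}$ representation above is what makes it tractable, since it reduces the claim to the elementary case $u\mapsto u^{K}$. A secondary point is ensuring that the passage from Poisson-kernel data to arbitrary $L^1$ densities is uniform enough, which the reduction to finite convex combinations plus contractivity resolves. An alternative route avoids complex analysis: conjugating by $x=\gamma_{K,\alpha}\cot\psi$ turns $S_{K,\alpha}$ into a smooth Lebesgue-preserving circle endomorphism of degree $K$, and one proves $\inf_\psi|\Psi'(\psi)|>1$ from the fixed-point relation for $\gamma_{K,\alpha}$, after which exactness follows from the standard theory of smooth expanding maps; but this uniform expansion estimate degenerates toward the edges of Condition~A and is delicate, so I prefer the Denjoy--Wolff argument.
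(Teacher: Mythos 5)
Your proposal is correct in outline, but it takes a genuinely different route from the paper's. The paper's own proof is elementary and symbolic: it conjugates $S_{K,\alpha}$ by $x=\cot(\pi\theta)$ to a monotone degree-$K$ circle map $\bar{S}_{K,\alpha}$ on $[0,1)$, uses the translational symmetry $\bar{S}_{K,\alpha}(\theta+j/K)=\bar{S}_{K,\alpha}(\theta)$ to build full-branch cylinder intervals $I_{j,n}$ with $\bar{S}_{K,\alpha}^n(I_{j,n})=[0,1)$, verifies the set-theoretic criterion $\lim_{n\to\infty}\mu_*(T^n s)=1$ by arguing that positive-measure sets contain cylinders, and pulls exactness back through the conjugacy. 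Your route is complex-analytic: $S_{K,\alpha}$ is a rational inner function of $\mathbb{H}$ (your Cayley/$u^K$ computation is right, and multiplication by $\alpha K$ is a M\"obius self-map of $\mathbb{H}$, so the pushforward identity $\mathcal{P}_{S_{K,\alpha}}P_w=P_{S_{K,\alpha}(w)}$ composes from the two elementary pieces); Condition A enters precisely by supplying the interior fixed point $w_*=i\gamma_{K,\alpha}$ via Theorem A; Denjoy--Wolff then gives $S_{K,\alpha}^n(w)\to w_*$ for all $w\in\mathbb{H}$, hence $\mathcal{P}_{S_{K,\alpha}}^n P_w\to f_*$ in $L^1$; and the approximation of an arbitrary density by finite convex combinations of Poisson kernels, combined with $L^1$-contractivity of the transfer operator, closes the argument --- this verifies the $L^1$ definition of exactness directly, whereas the paper verifies the measure-theoretic one. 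You correctly flag the only real burden, namely proving the pushforward identity for all $w\in\mathbb{H}$ rather than just the imaginary axis covered by Theorem A, and your plan for it is sound (it is a Letac/Neuwirth-type lemma, and your argument in effect recovers Aaronson's theorem that an inner function with an interior Denjoy--Wolff point is exact for the harmonic measure at that point). What each approach buys: yours explains structurally why Condition A is the exactness region (at its edges the fixed point escapes to the boundary $0$ or $\infty$ of $\mathbb{H}$ and the hyperbolic contraction degenerates) and handles all initial densities cleanly in one stroke; the paper's cylinder-set argument avoids inner-function machinery and is more self-contained, though as written it is sketchier --- e.g.\ the claim that every positive-measure set contains cylinder sets needs a density-point refinement, and the assertion $\mu(I_{j,n})=K^{-n}$ holds verbatim only for the first-generation partition --- so your proof, once the pushforward lemma is written out, is arguably the more rigorous of the two.
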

\begin{proof}
	This proof is based on \cite{Okubo}. For the map $S_{K,\alpha}$ defined by \eqref{SDefinition: SGB}, 
	substituting $\cot(\pi\theta_n)$
	into $x_n$, one has the map $\bar{S}_{K, \alpha} : [0, 1) \to [0, 1)$ such that
	\begin{eqnarray}
	\cot(\pi\theta_{n+1}) &=& \alpha K \cot(\pi K\theta_n),\nonumber\\
	\theta_{n+1} &=& \bar{S}_{K, \alpha} (\theta_n) = \frac{1}{\pi} \cot^{-1}\left\lbrace \alpha K \cot(\pi K\theta_n)\right\rbrace.
	\end{eqnarray}
	
	\noindent The derivative of $\bar{S}_{K, \alpha}$ with respect to $\theta$ is as follows.
	\begin{equation}
	\bar{S}_{K, \alpha}'(\theta) = \frac{\alpha K^2\left\lbrace 1+\cot^2(\pi K \theta )\right\rbrace}{\alpha^2K^2\cot^2(\pi K \theta) + 1}>0 ~~\mbox{for}~~ 0<\alpha<1.
	\end{equation}
	Then, $\bar{S}_{K, \alpha}$ increases monotonously. Since it holds that
	\begin{equation}
	\frac{1}{\pi} \cot^{-1}\left\lbrace \alpha K \cot(\pi K\theta_n)\right\rbrace
	= \frac{1}{\pi} \cot^{-1}\left[ \alpha K \cot\left\lbrace \pi K\left(\theta_n + \frac{j}{K}\right)\right\rbrace \right],~j=1,2\cdots, K-1,
	\end{equation}
	form of $\bar{S}_{K, \alpha}$ has the translational symmetry and it can be constructed by
	shifting the form on $[0, \frac{1}{K})$. That is, the map $\bar{S}_{K, \alpha}$ is also $K$ points to one points map and 
	on any interval $I_{j, 1}$, the form of the map $\bar{S}_{K,\alpha}$ is the same as
	that on the interval $I_{0, 1}$.
	Then by operating $\bar{S}_{K, \alpha}^{-1}$, the measure on $[0, 1)$ is 
	divided into $K$ equivalently. We obtain intervals $\{I_{j,n}\}$ defined below by operating $\bar{S}_{K, \alpha}^{-n}$
	into $[0, 1)$. The interval $I_{j, n} \subset [0, 1)$ is defined to be
	\begin{equation}
	\begin{array}{ccl}
	I_{j, k} &\overset{\mathrm{def}}{=}& \left[\eta_{j, n}, \eta_{j+1, n}\right),~ \eta_{j, n} < \eta_{j+1,n}, 0 \leq j \leq K^n-1,\\
	\eta_{0, n} &=& 0~~\mbox{and}~~\eta_{K^n,n}=1,\\
	\bar{S}_{K, \alpha}^n (I_{j, n}) &=& [0, 1),\\
	\mu(I_{j, n}) &=& \frac{1}{K^n}.
	\end{array}
	\end{equation}
	
	For any non zero measure subset $C \subset [0, 1)$, the set $C$ includes cylinder sets $\displaystyle \bigcup_{j,n'}I_{j,n'}$.
	Then for an invariant density $f_*$ and associated measure $\mu_*$ \cite{Schwegler}, it holds that 
	\begin{equation}
	1\geq \lim_{n \to \infty} \mu_*(\bar{S}_{K, \alpha}^n(C)) \geq \lim_{n \to \infty} \mu_*\left(\bar{S}_{K, \alpha}^n\left\lbrace \bigcup_{j,n'}I_{j,n'}\right\rbrace \right)=1.
	\end{equation}
	Therefore, the map $\bar{S}_{K, \alpha}$ on a phase space $[0, 1)$, is exact.
	Owing to the topological conjugacy, the map $S_{K, \alpha}$ is also exact.
\end{proof}

\section{Scaling behavior}
At first, discuss the case of $\frac{1}{K} < \alpha <1 (\gamma_{K, \alpha}>1)$. 
\begin{lemma}\label{Lemma: scaling alpha=1}
	In the case of $K=2N$, $\gamma_{2N, \alpha}$ behaves as
	\begin{equation}
	\frac{1}{\gamma_{2N, \alpha}} \sim O(\sqrt{1-\alpha}).
	\end{equation}
	in the limit of $\gamma_{2N, \alpha} \to \infty$.
\end{lemma}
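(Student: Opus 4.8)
The plan is to reduce the fixed-point equation \eqref{Sinvariant density} to the hyperbolic form already used in Lemma \ref{lemma: K, unique alpha>1/K} and then extract the leading asymptotics by a Taylor expansion near the origin. First I would substitute $\gamma_{2N,\alpha}=\coth y$ with $y>0$, so that $G_{2N}(\gamma_{2N,\alpha})=\coth(2Ny)$ and the defining relation becomes
\[
\coth y = \alpha(2N)\coth(2Ny).
\]
Since the edge value $\gamma_{2N,\alpha}=\infty$ corresponds to $\alpha=1$, the regime $\gamma_{2N,\alpha}\to\infty$ is exactly $y\to 0^+$; moreover $1/\gamma_{2N,\alpha}=\tanh y = y + O(y^3)$, so it suffices to determine how $y$ depends on $\alpha$ as $\alpha\to 1^-$.

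Next I would use the standard expansion $\coth u = \tfrac{1}{u}+\tfrac{u}{3}+O(u^3)$ on both sides with $u=y$ and $u=2Ny$. Writing $K=2N$, the equation becomes
\[
\frac{1}{y}+\frac{y}{3}+O(y^3) = \alpha\!\left(\frac{1}{y}+\frac{K^2y}{3}+O(y^3)\right).
\]
Multiplying through by $y$ and collecting orders gives
\[
(1-\alpha)+\frac{1-\alpha K^2}{3}\,y^2+O(y^4)=0,
\]
whence
\[
y^2 = \frac{3(1-\alpha)}{\alpha K^2-1}+O(y^4).
\]
As $\alpha\to 1^-$ the denominator tends to $K^2-1>0$ (recall $K\ge 2$), so to leading order $y^2 \sim \tfrac{3(1-\alpha)}{K^2-1}$, i.e. $y\sim O(\sqrt{1-\alpha})$. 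Combining with $1/\gamma_{2N,\alpha}=y+O(y^3)$ then yields $\frac{1}{\gamma_{2N,\alpha}}\sim O(\sqrt{1-\alpha})$, which is the claim.

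The main obstacle is making the expansion rigorous rather than formal: one must justify that the $O(y^4)$ remainder cannot spoil the leading term, which amounts to confirming that $y\to 0$ genuinely occurs as $\alpha\to 1^-$ and that the correction is of strictly higher order. This is guaranteed by Lemma \ref{Lemma: K=2N}, which supplies a unique solution $\gamma_{2N,\alpha}$ for each admissible $\alpha$ and forces $\gamma_{2N,\alpha}\to\infty$ (hence $y\to 0$) as $\alpha\to 1^-$; with $y\to 0$ established, dividing the collected relation by $(1-\alpha)$ and letting $\alpha\to 1^-$ isolates the coefficient $\tfrac{3}{K^2-1}$ and shows the remainder is $o(1)$. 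A cleaner alternative is to treat the relation as an implicit equation $\Phi(y,\alpha)=0$ and apply the implicit function theorem after factoring out the removable behaviour at $y=0$; either route delivers the same essential content, namely the explicit leading coefficient $3/(K^2-1)$.
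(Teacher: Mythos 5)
Your proof is correct and reaches the same conclusion, with the same leading constant, but by a different expansion device than the paper. The paper also starts from $\alpha = \coth y\,/\,\bigl(2N\coth(2Ny)\bigr)$, yet instead of Laurent-expanding $\coth$, it rewrites $1-\alpha$ \emph{exactly} as a ratio of finite binomial sums in powers of $\gamma_{2N,\alpha}^{-1}$ (using the closed rational form of $G_{2N}$), notes that the constant terms cancel because $2N\,{}_{2N}C_{0}={}_{2N}C_{1}$, and reads off the $\gamma_{2N,\alpha}^{-2}$ coefficient $\tfrac{1}{2N}\bigl(2N\,{}_{2N}C_{2}-{}_{2N}C_{3}\bigr)$. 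That constant equals $(K^{2}-1)/3$ for $K=2N$, so it matches your $y^{2}\sim 3(1-\alpha)/(K^{2}-1)$ exactly — a useful cross-check. What each route buys: the paper's computation is purely algebraic, a finite sum with no series remainder to control, so the rigor concern you raise about the $O(y^{4})$ term simply does not arise there; your route via $\coth u = \tfrac{1}{u}+\tfrac{u}{3}+O(u^{3})$ is shorter, makes the leading cancellation automatic, produces the explicit coefficient, and never uses the parity of $K$ — it would prove the paper's companion lemma for $K=2N+1$ (same limit $\gamma\to\infty$) verbatim, whereas the paper repeats the binomial computation separately for the odd case. One small remark: since the lemma is stated \emph{in the limit} $\gamma_{2N,\alpha}\to\infty$, you do not actually need Lemma \ref{Lemma: K=2N} to argue that this limit is attained as $\alpha\to 1-0$; your derived relation $1-\alpha \sim \tfrac{K^{2}-1}{3}\gamma_{2N,\alpha}^{-2}$ already shows the two limits are equivalent, so that step is harmless but dispensable.
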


\begin{proof}
	In the case of $K=2N$, \eqref{Sinvariant density} is rewritten as
	\begin{equation}
	\alpha = \frac{\coth y}{2N\coth(2Ny)}.
	\end{equation}
	Then, one has that
	\begin{equation}
	\begin{array}{lll}
	1-\alpha &=& \displaystyle\frac{1}{2N}\frac{
		\displaystyle 2N\sum_{k=0}^n {}_{2N}C_{2k}\gamma_{2N, \alpha}^{2N-2k}- \sum_{k=0}^{n-1}{}_{2N}C_{2k+1}\gamma_{2N, \alpha}^{2N-2k}	
	}{
	\displaystyle \sum_{k=0}^{n-1} {}_{2N}C_{2k+1}\gamma_{2N, \alpha}^{2N-2k-1}
}\cdot \frac{
\displaystyle \sum_{k=0}^{n-1} {}_{2N}C_{2k+1}\gamma_{2N, \alpha}^{2N-2k-1}
}{
\displaystyle \sum_{k=0}^{n} {}_{2N}C_{2k}\gamma_{2N, \alpha}^{2N-2k}
}

\\
&=& \displaystyle\frac{1}{2N}\frac{
	\displaystyle 2N\sum_{k=0}^n {}_{2N}C_{2k}\gamma_{2N, \alpha}^{-2k}- \sum_{k=0}^{n-1}{}_{2N}C_{2k+1}\gamma_{2N, \alpha}^{-2k}	
}{
\displaystyle \sum_{k=0}^{n} {}_{2N}C_{2k}\gamma_{2N, \alpha}^{-2k}
},
\end{array}
\end{equation}
In the limit of $\gamma_{2N, \alpha} \to \infty$, it holds that
\begin{equation}
\begin{array}{lll}
1-\alpha &\sim& \displaystyle\frac{1}{2N}\left(2N\cdot {}_{2N}C_{2}\gamma_{2N, \alpha}^{-2}-{}_{2N}C_{3}\gamma_{2N, \alpha}^{-2}\right),\\
\therefore	\displaystyle\frac{1}{\gamma_{2N, \alpha}} &\sim& \sqrt{1-\alpha}.  
\end{array}\label{Scaling 1/K<alpha, K=2N}
\end{equation}
\end{proof}

\begin{lemma}\label{Lemma: scaling alpha=1/K^2}
	In the case of $K=2N+1 $, $\gamma_{2N+1, \alpha}$ behaves as
	\begin{equation}
	\frac{1}{\gamma_{2N+1, \alpha}} \sim O(\sqrt{1-\alpha}).
	\end{equation}
	in the limit of $\gamma_{2N+1, \alpha} \to \infty$.
\end{lemma}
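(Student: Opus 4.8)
The plan is to mirror the proof of the even case, Lemma \ref{Lemma: scaling alpha=1}, since the only structural change is that the numerator and denominator polynomials of $G_K$ swap parities. First I would solve the fixed-point equation \eqref{Sinvariant density} for $\alpha$ in the case $K=2N+1$, obtaining
\begin{equation*}
\alpha = \frac{\gamma_{2N+1,\alpha}}{(2N+1)\,G_{2N+1}(\gamma_{2N+1,\alpha})},
\end{equation*}
and then form $1-\alpha$ using the explicit rational expression \eqref{G_2N+1}. Writing $P(\gamma)=\sum_{k=0}^{N}{}_{2N+1}C_{2k}\,\gamma^{2(N-k)+1}$ and $Q(\gamma)=\sum_{k=0}^{N}{}_{2N+1}C_{2k+1}\,\gamma^{2(N-k)}$ for the numerator and denominator of $G_{2N+1}$, this gives the single rational function
\begin{equation*}
1-\alpha = \frac{(2N+1)P(\gamma)-\gamma\,Q(\gamma)}{(2N+1)P(\gamma)}.
\end{equation*}

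Next I would extract the leading behavior as $\gamma\to\infty$. The top-degree contribution of both $(2N+1)P(\gamma)$ and $\gamma\,Q(\gamma)$ is $(2N+1)\gamma^{2N+1}$, so the degree-$(2N+1)$ terms in the numerator cancel exactly; this cancellation is precisely the statement that $\alpha\to1$ as $\gamma\to\infty$, consistent with $\gamma_{K,1}=\infty$ for all $K$. The first surviving term comes from the degree-$(2N-1)$ coefficients, so that the numerator is $\sim\bigl[(2N+1)\,{}_{2N+1}C_{2}-{}_{2N+1}C_{3}\bigr]\gamma^{2N-1}$ while the denominator is $\sim(2N+1)\gamma^{2N+1}$. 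Hence $1-\alpha\sim C\,\gamma^{-2}$ for a constant $C$, and inverting yields $1/\gamma_{2N+1,\alpha}\sim\sqrt{(1-\alpha)/C}=O(\sqrt{1-\alpha})$, which is the claim.

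The one point requiring genuine care, and the part I expect to be the main obstacle, is verifying that the surviving coefficient $C$ is strictly positive, since otherwise the square-root asymptotics would be ill-posed. This reduces to the binomial identity
\begin{equation*}
(2N+1)\,{}_{2N+1}C_{2}-{}_{2N+1}C_{3}
= \frac{2\,(2N+1)(2N)(N+1)}{3} > 0,
\end{equation*}
which holds for every $N\in\mathbb{N}$, so the leading coefficient is positive and the scaling exponent is exactly $1/2$. The remaining algebra is routine bookkeeping of the two leading binomial terms in $P$ and $Q$, exactly as in the even case \eqref{Scaling 1/K<alpha, K=2N}.
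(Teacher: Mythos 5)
Your proposal is correct and follows essentially the same route as the paper: both solve \eqref{Sinvariant density} for $\alpha$ as a ratio of the even/odd binomial sums, observe that the top-degree terms of $(2N+1)P(\gamma)$ and $\gamma Q(\gamma)$ cancel, and read off $1-\alpha \sim C\gamma^{-2}$ from the surviving coefficient $(2N+1)\,{}_{2N+1}C_{2}-{}_{2N+1}C_{3}$. Your explicit check that this coefficient equals $\tfrac{2(2N+1)(2N)(N+1)}{3}>0$ is a worthwhile detail the paper leaves implicit, but it does not change the argument.
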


\begin{proof}
	In the case of $K=2N+1$, \eqref{Sinvariant density} is rewritten as
	\begin{equation}
	\alpha = \frac{\coth y}{(2N+1)\coth\left\lbrace (2N+1)y\right\rbrace }.
	\end{equation}
	Then one has that
	\begin{equation}
	\begin{array}{cll}
	1-\alpha &=& \displaystyle  \frac{1}{2N+1}\frac{
		\displaystyle (2N+1)\sum_{k=0}^n {}_{2N+1}C_{2k}\gamma_{2N+1, \alpha}^{-2k}- \sum_{k=0}^{n}{}_{2N+1}C_{2k+1}\gamma_{2N+1, \alpha}^{-2k}	
	}{
	\displaystyle \sum_{k=0}^{n} {}_{2N+1}C_{2k}\gamma_{2N+1, \alpha}^{-2k}
},
\end{array}
\end{equation}
In the limit of $\gamma_{2N+1, \alpha} \to \infty$, it holds that
\begin{equation}
\begin{array}{cll}
1-\alpha &\sim& \displaystyle  \frac{1}{2N+1}\left\lbrace(2N+1)\cdot {}_{2N+1}C_{2}\gamma_{2N+1, \alpha}^{-2}-{}_{2N+1}C_{3}\gamma_{2N+1, \alpha}^{-2}\right\rbrace,\\
\therefore	\displaystyle \frac{1}{\gamma_{2N+1, \alpha}} &\sim& O(\sqrt{1-\alpha}).  
\end{array}\label{Scaling 1/K<alpha, K=2N+1}
\end{equation}
\end{proof}

\begin{lemma}\label{Lemma: scaling alpha=0}
	In the case of $K=2N$, $\gamma_{2N, \alpha}$ behaves as
	\begin{equation}
	\gamma_{2N, \alpha} \sim O(\sqrt{\alpha}).
	\end{equation}
	in the limit of $\gamma_{2N, \alpha} \to 0$.
\end{lemma}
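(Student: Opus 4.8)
The plan is to read off the leading-order behaviour straight from the fixed-point equation \eqref{Sinvariant density}. For $K=2N$ this equation is $\gamma_{2N,\alpha}=\alpha(2N)G_{2N}(\gamma_{2N,\alpha})$; I would insert the explicit rational expression \eqref{G_2N} for $G_{2N}$ and clear the denominator, obtaining the polynomial identity
\[
\sum_{k=0}^{N-1}{}_{2N}C_{2k+1}\,\gamma_{2N,\alpha}^{2(N-k)}
=\alpha(2N)\sum_{k=0}^{N}{}_{2N}C_{2k}\,\gamma_{2N,\alpha}^{2(N-k)}.
\]
Both sides are even polynomials in $\gamma_{2N,\alpha}$, which is exactly what makes a clean square-root scaling possible. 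The regime $\gamma_{2N,\alpha}\to0$ is the relevant one: by the uniqueness and monotonicity established in the existence lemmas for $K=2N$, the solution stays in $(0,1)$ for $0<\alpha<\frac1K$ and tends to $0$ precisely as $\alpha\to+0$.

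Next I would balance the lowest-order terms on the two sides. On the left the smallest exponent is $2$, attained at $k=N-1$ with coefficient ${}_{2N}C_{2N-1}=2N$; on the right the smallest exponent is $0$, attained at $k=N$ with coefficient ${}_{2N}C_{2N}=1$. Keeping only these dominant contributions gives $2N\,\gamma_{2N,\alpha}^{2}\sim\alpha(2N)$, that is $\gamma_{2N,\alpha}^{2}\sim\alpha$, and therefore $\gamma_{2N,\alpha}\sim O(\sqrt{\alpha})$, as claimed. The same conclusion follows from the hyperbolic substitution $\gamma_{2N,\alpha}=\tanh y$ already used for this case, under which \eqref{Sinvariant density} reads $\alpha=\tanh y\,\tanh(2Ny)/(2N)$: sending $y\to0$ and using $\tanh y\sim y$, $\tanh(2Ny)\sim 2Ny$ yields $\alpha\sim y^{2}$ together with $\gamma_{2N,\alpha}=\tanh y\sim y$, again giving $\gamma_{2N,\alpha}\sim\sqrt{\alpha}$. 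I would present the polynomial version, since it mirrors the proofs of the two preceding scaling lemmas most directly.

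The main (if modest) obstacle is to confirm that the retained terms really are the dominant ones, so that the informal ``$\sim$'' is legitimate. This is secured by observing that the $\gamma^{2}$ term on the left is the unique lowest-order contribution there and carries the strictly positive coefficient $2N$, while the constant term $2N\alpha$ on the right likewise has a nonzero coefficient; hence no cancellation of leading orders can occur. All discarded terms on either side carry at least two more powers of $\gamma_{2N,\alpha}$, so after substituting $\gamma_{2N,\alpha}^{2}\sim\alpha$ they are $O(\alpha^{2})=o(\alpha)$. A term-by-term comparison then upgrades the heuristic balance to the rigorous statement $\gamma_{2N,\alpha}^{2}=\alpha\,(1+O(\alpha))$, from which $\gamma_{2N,\alpha}\sim O(\sqrt{\alpha})$ follows.
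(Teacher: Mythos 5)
Your proposal is correct and follows essentially the same route as the paper: the paper also rewrites the fixed-point equation via $\gamma_{2N,\alpha}=\tanh y$ as $\alpha=\frac{1}{2N}\tanh y\,\tanh(2Ny)$, expresses it as a ratio of even polynomials in $\gamma_{2N,\alpha}$, and balances the leading terms ($2N\gamma_{2N,\alpha}^2$ against $2N\alpha$) as $\gamma_{2N,\alpha}\to 0$ to conclude $\alpha\sim\gamma_{2N,\alpha}^2$. Your cleared-denominator polynomial identity and explicit dominance check are a cosmetic repackaging (and slight rigor upgrade) of the same leading-order argument.
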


\begin{proof}
	Discuss the case of $K=2N$ and $0<\alpha < \frac{1}{K}$. \eqref{Sinvariant density} is rewritten as
	\begin{equation}
	\alpha = \frac{1}{2N}\tanh y \cdot \tanh (2Ny)
	\end{equation}
	Then one has that
	\begin{equation}
	\alpha = \frac{
		\displaystyle \sum_{k=0}^{N-1} {}_{2N}C_{2k+1}\gamma_{2N, \alpha}^{2k+2}
	}{
	\displaystyle \sum_{k=0}^{N} {}_{2N}C_{2k}\gamma_{2N, \alpha}^{2k}
}.
\end{equation}
In the limit of $\gamma_{2N, \alpha} \to 0$, it holds that
\begin{equation}
\begin{array}{cll}
\alpha &\sim& \frac{1}{2N} \cdot 2N \gamma_{2N, \alpha}^2 = \gamma_{2N, \alpha}^2,\\
\therefore \gamma_{2N, \alpha} &\sim& \sqrt{\alpha}.
\end{array}\label{Scaling 0<alpha<1/K, K=2N}
\end{equation}
\end{proof} 

\begin{lemma}
	In the case of $K=2N+1 $, $\gamma_{2N+1, \alpha}$ behaves as
	\begin{equation}
	\gamma_{2N+1, \alpha} \sim O\left(\sqrt{\alpha-\frac{1}{(2N+1)^2}}\right).
	\end{equation}
	in the limit of $\gamma_{2N+1, \alpha} \to 0$.
\end{lemma}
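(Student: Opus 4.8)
The plan is to mirror the argument used for the even case in Lemma \ref{Lemma: scaling alpha=0}, solving the fixed-point equation \eqref{Sinvariant density} for $\alpha$ as an explicit function of the scale parameter and expanding near $\gamma_{2N+1,\alpha}=0$. First I would set $K=2N+1$ and, exactly as in the proof of Lemma \ref{lemma: K, unique K=2N+1 alpha<1/K}, substitute $\gamma_{2N+1,\alpha}=\tanh y$ (legitimate on $0<\gamma_{2N+1,\alpha}<1$), so that \eqref{Sinvariant density} becomes $\tanh y=\alpha(2N+1)\tanh\{(2N+1)y\}$. Equivalently, and more convenient for the expansion, I would use the rational form \eqref{G_2N+1} of $G_{2N+1}$ directly to write $\alpha$ as a ratio of even polynomials in $\gamma_{2N+1,\alpha}$,
\begin{equation*}
\alpha=\frac{1}{2N+1}\cdot\frac{\displaystyle\sum_{m=0}^{N}{}_{2N+1}C_{2m}\,\gamma_{2N+1,\alpha}^{2m}}{\displaystyle\sum_{m=0}^{N}{}_{2N+1}C_{2m+1}\,\gamma_{2N+1,\alpha}^{2m}},
\end{equation*}
which is the odd-$K$ counterpart of the even-$K$ relation \eqref{Scaling 0<alpha<1/K, K=2N}.

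Second, I would take the limit $\gamma_{2N+1,\alpha}\to0$ and expand this ratio to second order in $\gamma_{2N+1,\alpha}$. The constant terms give $\alpha\to\frac{1}{2N+1}\cdot\frac{1}{2N+1}=\frac{1}{(2N+1)^2}=\alpha_{c2}$, recovering the critical value, so the scaling is entirely controlled by the $O(\gamma_{2N+1,\alpha}^2)$ correction. Keeping the next term in numerator and denominator and multiplying out the two geometric expansions yields
\begin{equation*}
\alpha-\frac{1}{(2N+1)^2}\sim\frac{1}{(2N+1)^2}\left({}_{2N+1}C_{2}-\frac{{}_{2N+1}C_{3}}{2N+1}\right)\gamma_{2N+1,\alpha}^{2}.
\end{equation*}
Inverting this relation gives $\gamma_{2N+1,\alpha}\sim O\!\left(\sqrt{\alpha-\frac{1}{(2N+1)^2}}\right)$, as claimed.

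The main obstacle—really the only substantive point—is to verify that the coefficient of $\gamma_{2N+1,\alpha}^2$ is strictly positive, so that the square root is real and the exponent is genuinely $\tfrac12$ rather than degenerate. This reduces to the elementary binomial identity ${}_{2N+1}C_{2}-\frac{1}{2N+1}{}_{2N+1}C_{3}=\frac{2N}{6}\{3(2N+1)-(2N-1)\}=\frac{4N(N+1)}{3}>0$ for every $N\in\mathbb{N}$. Positivity is also precisely what makes the expansion consistent with the Condition A: a small $\gamma_{2N+1,\alpha}>0$ forces $\alpha>\frac{1}{(2N+1)^2}$, matching the admissible range $\frac{1}{K^2}<\alpha$. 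As a cross-check one may recover the same coefficient from the $\tanh$ form, where it appears as $\frac{(2N+1)^2-1}{3(2N+1)^2}$; the two agree since $(2N+1)^2-1=4N(N+1)$, and this is moreover consistent with the explicit cases, as $\gamma_{3,\alpha}=\sqrt{(9\alpha-1)/(3-3\alpha)}$ and $\gamma_{5,\alpha}$ vanish like $\sqrt{\alpha-\tfrac1{9}}$ and $\sqrt{\alpha-\tfrac1{25}}$ respectively.
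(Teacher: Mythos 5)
Your proposal is correct and follows essentially the same route as the paper's proof: both use the fixed-point relation to write $\alpha$ as an explicit rational (even) function of $\gamma_{2N+1,\alpha}$ and read off the leading $O(\gamma_{2N+1,\alpha}^2)$ term of $\alpha-\frac{1}{(2N+1)^2}$ near $\gamma_{2N+1,\alpha}=0$. Your explicit verification that the quadratic coefficient equals $\frac{4N(N+1)}{3}>0$ is a small but worthwhile addition, since the paper takes the nonvanishing of that coefficient for granted.
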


\begin{proof}
	Discuss the case of $K=2N+1$ and $\frac{1}{K^2}<\alpha < \frac{1}{K}$. \eqref{Sinvariant density} is rewritten as
	\begin{equation}
	\alpha = \frac{\tanh y}{(2N+1)\tanh\left\lbrace (2N+1)y\right\rbrace}.
	\end{equation}
	Then one has that,
	\begin{equation}
	\begin{array}{clll}
	\displaystyle  \alpha - \frac{1}{(2N+1)^2} &=& \frac{
		\displaystyle (2N+1)\tanh y \tanh \{(2N+1)y\}	
	}{
	\displaystyle (2N+1)^2 \tanh \{(2N+1)y\}	
},\\
&=& \frac{
	\displaystyle \sum_{k=0}^{N} \left\lbrace (2N+1){}_{2N+1}C_{2k}- {}_{2N+1}C_{2k+1}\right\rbrace \gamma_{2N+1, \alpha}^{2k}
}{
\displaystyle \sum_{k=0}^{N} {}_{2N+1}C_{2k+1}\gamma_{2N+1, \alpha}^{2k}
},\\
&=& \frac{
	\displaystyle \sum_{k=1}^{N} \left\lbrace (2N+1){}_{2N+1}C_{2k}- {}_{2N+1}C_{2k+1}\right\rbrace \gamma_{2N+1, \alpha}^{2k}
}{
\displaystyle \sum_{k=0}^{N} {}_{2N+1}C_{2k+1}\gamma_{2N+1, \alpha}^{2k}
}.
\end{array}
\end{equation}
In the limit of $\gamma_{2N+1, \alpha} \to 0$, it holds that
\begin{equation}
\begin{array}{cll}
\displaystyle\alpha - \frac{1}{(2N+1)^2} &\sim& \frac{
	\displaystyle \left\lbrace (2N+1){}_{2N+1}C_2 -{}_{2N+1}C_3\right\rbrace \gamma_{2N+1, \alpha}^2	
}{\displaystyle 
2N+1
},\\
\therefore \displaystyle \gamma_{2N, \alpha} &\sim& O\left(\sqrt{\alpha-\frac{1}{(2N+1)^2}}\right).
\end{array}\label{Scaling 1/K^2<alpha<1/K, K=2N+1}
\end{equation}
\end{proof}

From Lemma \ref{Lemma: scaling alpha=1}, \ref{Lemma: scaling alpha=1/K^2} and \ref{Lemma: scaling alpha=0}, 
one knows that there are relations between the parameter $\alpha$ and the scaling parameter $\gamma_{K, \alpha}$.
For all $\alpha$ which satisfied the Condition A, Lyapunov exponent is denoted as
\begin{eqnarray}
\lambda_{K, \alpha} &=& \frac{1}{\pi}\int_{\mathbb{R}\backslash A} \log\left|S_{K, \alpha}'(x)\right| \frac{\gamma_{K, \alpha}}{\gamma_{K, \alpha}^2 + x^2} dx
= \frac{1}{\pi}\int_{-\infty}^\infty \log\left|S_{K, \alpha}'(x)\right| \frac{\gamma_{K, \alpha}}{\gamma_{K, \alpha}^2 + x^2} dx
< \infty, \label{Lyapunov exponent}\\
&=& \left\lbrace
\begin{array}{lll}
\displaystyle	\frac{\gamma_{K, \alpha}}{\pi}\int_{-\infty}^{\infty} \log\left|S_{K, \alpha}'(x)\right| \frac{1}{\gamma_{K, \alpha}^2 + x^2} dx, &\mbox{for}& 0<\gamma_{K, \alpha} \ll 1,\\
\displaystyle	\frac{1}{\gamma_{K, \alpha}\pi}\int_{-\infty}^{\infty} \log\left|S_{K, \alpha}'(x)\right| \frac{1}{1 + (x/\gamma_{K, \alpha})^2} dx, & \mbox{for}& 1\ll \gamma_{K, \alpha}.
\end{array} \label{Lyapunov exponent integral}
\right.
\end{eqnarray}

Define a function $f_1(\theta, \gamma_{K,\alpha})$ to be
\begin{eqnarray}
f_1(\theta, \gamma_{K,\alpha}) = \log \left|\frac{\alpha K^2\sin^2\theta}{\sin^2K\theta}\right|\frac{\gamma_{K,\alpha}}{\gamma_{K,\alpha}^2\sin^2\theta + \cos^2\theta}
\end{eqnarray}
and also define a set of points $\{a_n\}_{n=1}^{K-1}$ such that at $x=a_n\in (0, \pi]$, the function 
$\log \left|\frac{\alpha K^2\sin^2\theta}{\sin^2K\theta}\right|$ is not continuous. 
By changing variable from $x$ to $\cot \theta$, Lyapunov exponent $\lambda_{K,\alpha}$ is rewritten as
\begin{eqnarray}
\lambda_{K,\alpha} = \frac{1}{\pi}\int_{0}^{\pi}f_1(\theta, \gamma_{K,\alpha})d\theta. \label{Lyapunov exponent integral 2}
\end{eqnarray}

\setcounter{theorem}{2}
\renewcommand{\thetheorem}{\Alph{theorem}}
\begin{theorem}
	Suppose that the Condition A is satisfied.
	\begin{itemize}
		\item For any $K\in \mathbb{N}\backslash\{1\}$, it holds that 
		$\nu_1= \frac{1}{2}$ as $\alpha \to 1-0$.
		\item For any $K\in \mathbb{N}\backslash\{1\}$, it holds that 
		$\nu_1= 1$ as $\alpha \to 1+0$.
		\item For $K=2N+1$, it holds that $\nu_2 = \frac{1}{2}$ as $\alpha \to \frac{1}{K^2}+0$.
	\end{itemize}
\end{theorem}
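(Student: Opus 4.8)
The plan is to turn the integral \eqref{Lyapunov exponent} into a closed-form function of the scale parameter $\gamma_{K,\alpha}$, and then read off each critical exponent by expanding that function in the two limits $\gamma_{K,\alpha}\to\infty$ (which corresponds to $\alpha\to1-0$) and $\gamma_{K,\alpha}\to0$ (which for $K=2N+1$ corresponds to $\alpha\to\frac{1}{K^2}+0$), feeding in the square-root laws already established in \eqref{Scaling 1/K<alpha, K=2N}, \eqref{Scaling 1/K<alpha, K=2N+1} and \eqref{Scaling 1/K^2<alpha<1/K, K=2N+1}. The starting point is the factorization of the integrand: from the derivative formulas for $S'_{K,\alpha}$ one has $\log\left|S'_{K,\alpha}(x)\right| = \log(K^2\alpha) + (K-1)\log(1+x^2) - 2\log\left|Q_K(x)\right|$, where $Q_K$ is the denominator polynomial of $S_{K,\alpha}$. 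I would first record that $Q_K$ has leading coefficient $K$ and exactly the real simple roots $\cot(j\pi/K)$, $j=1,\dots,K-1$ (as is visible for $K=3,4,5$ in \eqref{S_{3}}--\eqref{S_{5}}).

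Next I would average each piece against the Cauchy density using the Poisson-kernel identities $\frac{1}{\pi}\int_{-\infty}^{\infty}\log(x^2+a^2)\frac{\gamma}{x^2+\gamma^2}\,dx = 2\log(a+\gamma)$ and $\frac{1}{\pi}\int_{-\infty}^{\infty}\log\left|x-c\right|\frac{\gamma}{x^2+\gamma^2}\,dx = \tfrac12\log(c^2+\gamma^2)$, valid for $a,\gamma>0$ and real $c$ since $\log\left|\,\cdot\,\right|$ is harmonic away from its integrable singularities. Summing the contributions, and combining the constant $-2\log K$ from the leading coefficient of $Q_K$ with $\log(K^2\alpha)$, yields the closed form
\[
\lambda_{K,\alpha} = \log\alpha + 2(K-1)\log(1+\gamma_{K,\alpha}) - \sum_{j=1}^{K-1}\log\!\left(\cot^2\!\tfrac{j\pi}{K}+\gamma_{K,\alpha}^2\right),
\]
which also reproduces the explicit expressions \eqref{Lyapunov K} for $K=3,4,5$.

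With this formula the asymptotics become routine. As $\gamma_{K,\alpha}\to\infty$ the $K-1$ divergent $\log\gamma_{K,\alpha}$ contributions cancel against $2(K-1)\log(1+\gamma_{K,\alpha})$, and a Taylor expansion leaves $\lambda_{K,\alpha} = \log\alpha + \frac{2(K-1)}{\gamma_{K,\alpha}} + O(\gamma_{K,\alpha}^{-2})$; since $\log\alpha=O(1-\alpha)$ while $\gamma_{K,\alpha}^{-1}\sim O(\sqrt{1-\alpha})$, the correction dominates and $\lambda_{K,\alpha}\sim b\sqrt{1-\alpha}$ with $b>0$, i.e.\ $\nu_1=\frac12$, for every $K$. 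For $K=2N+1$ and $\gamma_{K,\alpha}\to0$ (where no root $\cot(j\pi/K)$ vanishes) the same expansion gives $\lambda_{K,\alpha} = \log(\alpha K^2) + 2(K-1)\gamma_{K,\alpha} + O(\gamma_{K,\alpha}^2)$; since $\log(\alpha K^2)=O\!\left(\alpha-\tfrac{1}{K^2}\right)$ while $\gamma_{K,\alpha}\sim O\!\left(\sqrt{\alpha-\tfrac{1}{K^2}}\right)$, again the square-root term wins and $\nu_2=\frac12$. The middle bullet is immediate: for $\alpha>1$ the orbits are attracted to infinity and $\lambda_{K,\alpha}=\log\alpha$ exactly, so $\lambda_{K,\alpha}\sim(\alpha-1)$ and $\nu_1=1$.

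The main obstacle is the closed-form step rather than the asymptotics: one must justify the root and leading-coefficient description of $Q_K$ for arbitrary $K$ (so the Poisson identities can be applied factor by factor) and verify that the logarithmic singularities of $\log\left|S'_{K,\alpha}\right|$ are integrable against the Cauchy density, so that the harmonic-extension evaluation is legitimate. A convenient feature that makes positivity automatic is that both correction coefficients turn out to equal the same positive number $2(K-1)$, so no separate argument for $b>0$ is required; exactness (Theorem \ref{Exact}) then enters only as a consistency check, guaranteeing $\lambda_{K,\alpha}>0$ throughout the Condition A range.
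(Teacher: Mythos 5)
Your proposal is correct, and it reaches the three exponents by a genuinely different analytic route than the paper. The paper never evaluates the Lyapunov integral: after the change of variable $x=\cot\theta$ it writes $\lambda_{K,\alpha}=\frac{1}{\pi}\int_0^\pi f_1(\theta,\gamma_{K,\alpha})\,d\theta$ as in \eqref{Lyapunov exponent integral 2}, differentiates under the integral sign with respect to $z=1/\gamma_{K,\alpha}$ (resp.\ $\gamma_{K,\alpha}$), and Taylor-expands about $z=0$ (resp.\ $\gamma_{K,\alpha}=0$), so the first-order coefficient remains an unevaluated integral such as $\frac{1}{\pi}\int_0^\pi\log\left|\frac{K^2\sin^2\theta}{\sin^2K\theta}\right|\frac{d\theta}{\sin^2\theta}$; the square-root laws for $\gamma_{K,\alpha}$ versus $\alpha$ (the supplemental lemmas you cite) then convert $O(z)$ into $O(\sqrt{1-\alpha})$, exactly as you do, and the $\alpha\to1+0$ bullet is handled identically in both arguments via $\lambda_{K,\alpha}=\log\alpha$. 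You instead compute the integral in closed form for every $K$ via the factorization of $S_{K,\alpha}'$ and the Poisson-kernel identities, obtaining $\lambda_{K,\alpha}=\log\alpha+2(K-1)\log(1+\gamma_{K,\alpha})-\sum_{j=1}^{K-1}\log\bigl(\cot^2\tfrac{j\pi}{K}+\gamma_{K,\alpha}^2\bigr)$, which indeed reproduces \eqref{Lyapunov K} (e.g.\ for $K=3$, using $\gamma_{3,\alpha}^2+\tfrac13=\tfrac{8\alpha}{3(1-\alpha)}$). This buys two things the paper does not supply: the linear coefficient is explicitly $2(K-1)>0$, so the prefactor $b>0$ is automatic --- in the paper's expansion one must still check that its coefficient integral is nonzero, a point it passes over silently --- and you get a single analytic formula valid for all $K$ rather than the case-by-case expressions for $K=3,4,5$. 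The price is the algebraic groundwork you yourself flag: you must prove that $Q_K$ has leading coefficient $K$ and simple real roots $\cot(j\pi/K)$ (immediate from $Q_K(\cot\theta)=\sin(K\theta)/\sin^K\theta$), and, for the third bullet, the identity $\prod_{j=1}^{K-1}\cot^2\tfrac{j\pi}{K}=\tfrac{1}{K^2}$ for odd $K$ (equivalently $|Q_K(0)|=1$), which is what turns your constant term into $\log(\alpha K^2)$ and makes $\lambda$ vanish at $\alpha=\tfrac{1}{K^2}$; both are routine, so these are presentation gaps rather than mathematical ones.
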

\begin{proof}
	The integrand in \eqref{Lyapunov exponent integral 2} are continuous in $(a_n,a_{n+1}]$ for $0\leq n \leq K$
	where $a_0=0$ and $a_{K} = \pi$.
	The derivative of $f_1(\theta, \gamma_{K,\alpha})$ with respect to $\gamma_{K,\alpha}$ is as follows.
	\begin{eqnarray}
	\frac{\partial f_1}{\partial \gamma_{K,\alpha}} = \frac{1}{\alpha}\frac{\partial \alpha}{\partial \gamma_{K,\alpha}}
	\frac{\gamma_{K,\alpha}}{\gamma_{K,\alpha}^2\sin^2\theta + \cos^2\theta}
	+ \log\left|\alpha\right| \frac{-\gamma_{K,\alpha}^2\sin^2\theta+\cos^2\theta}{(\gamma_{K,\alpha}^2\sin^2\theta+\cos^2\theta)^2}
	+\log \left|\frac{K^2\sin^2\theta}{\sin^2K\theta}\right|\frac{-\gamma_{K,\alpha}^2\sin^2\theta+\cos^2\theta}{(\gamma_{K,\alpha}^2\sin^2\theta+\cos^2\theta)^2}.
	\end{eqnarray}
	The derivative is continuous on each interval $(a_n, a_{n+1}]$.
	
	\noindent (i) In the limit of $\alpha \to 1-0$ $(\gamma_{K,\alpha} \to \infty)$ for any $K$, change variable as $z=\frac{1}{\gamma_{K,\alpha}}$ and define a function $f_2(\theta, z)$ as
	\begin{eqnarray}
	f_2(\theta, z) = \log\left|\frac{\alpha K^2\sin^2\theta}{\sin^2K\theta}\right|\frac{z}{\sin^2\theta+ z^2\cos^2\theta}.
	\end{eqnarray}
	It holds that
	\begin{eqnarray}
	\frac{\partial f_2}{\partial z}(\theta, 0) \sim \log\left|\frac{K^2\sin^2\theta}{\sin^2K\theta}\right|\frac{1}{\sin^2\theta}.
	\end{eqnarray}
	and it dose not depend on $z$. Thus, one has that in the limit of $z \to +0$ $(\gamma_{K, \alpha}\to \infty, \alpha \to 1-0)$, 
	\begin{eqnarray}
	\lambda_{K, \alpha} = \lambda_K(z) &\sim& \lambda(0) + \left[\frac{1}{\pi}\int_0^\pi\log\left|\frac{K^2\sin^2\theta}{\sin^2K\theta}\right|\frac{1}{\sin^2\theta}d\theta\right]z + O(z^2).
	\end{eqnarray}
	where $\lambda_K(z=0)=0$. Therefore it holds that
	\begin{eqnarray}
	\therefore \lambda_{K, \alpha} &\sim& O\left(z\right) \sim O\left(\sqrt{1-\alpha}\right). \label{asymptotic critical exponent infty}
	\end{eqnarray}
	
	\noindent (ii) In the case of $K=2N+1$, there is a relation as $\alpha \sim \gamma_{K,\alpha}^2+\frac{1}{K^2}$
	in the limit of $\gamma_{K,\alpha}\to +0$ $(\alpha \to \frac{1}{K^2}+0)$.
	Then it holds that
	\begin{eqnarray}
	\frac{\partial f_1}{\partial \gamma_{K,\alpha}}(\theta, 0) \sim 
	\log \left|\frac{\sin^2\theta}{\sin^2K\theta}\right|\frac{1}{\cos^2\theta}, ~K=2N+1
	\end{eqnarray}
	and it dose not depend on $\gamma_{K,\alpha}$. Thus by a Taylor expansion with respect to $\gamma_{K,\alpha}$, 
	one has that in the limit of $\gamma_{K, \alpha} \to 0$,
	\begin{eqnarray}
	\lambda_{K,\alpha} =\lambda_K(\gamma_{K,\alpha}) \sim
	\lambda_K(0) +  \left[\displaystyle \frac{1}{\pi}\int_0^\pi \left\lbrace\log \left|\frac{\sin^2\theta}{\sin^2K\theta}\right|\frac{1}{\cos^2\theta}\right\rbrace d\theta\right] \gamma_{K,\alpha} +O(\gamma_{K, \alpha}^2), ~K=2N+1.
	\end{eqnarray}
	where $\lambda_K(\gamma_{K,\alpha}=0)=0$. Therefore it holds that
	\begin{eqnarray}
	\lambda_{K,\alpha} \sim O(\gamma_{K,\alpha}) \sim 
	O\left(\sqrt{\alpha-\frac{1}{K^2}}\right),~ K=2N+1. \label{asymptotic critical exponent 0}
	\end{eqnarray}
	in the limit of $\alpha \to \frac{1}{K^2}+0$.
	
	\noindent (iii) In the limit of $\alpha\to 1+0$ for any $K$, Lyapunov exponent is denoted as for $\alpha >1$, 
	\begin{equation}
	\begin{array}{lll}
	\lambda_{K, \alpha} &=& \log \alpha = \log \left\lbrace 1+(\alpha-1)\right\rbrace,\\
	&=& (\alpha-1) -\frac{1}{2}(\alpha-1)^2 + \frac{1}{3}(\alpha-1)^3-\cdots.
	\end{array}
	\end{equation}
	Therefore, it holds that
	\begin{equation}
	\lambda_{K, \alpha} \sim O(\alpha-1). \label{asymptotic critical exponent 1+0}
	\end{equation}
	From equations \eqref{asymptotic critical exponent 0}, \eqref{asymptotic critical exponent infty} and
	\eqref{asymptotic critical exponent 1+0}, one sees 
	that for any $K$, the critical exponent of Lyapunov exponent $\nu_1$ is $\frac{1}{2}$ in the limit of $\alpha\to 1-0$, 
	that for $K=2N+1$, $\nu_2=\frac{1}{2}$ in the limit of $\alpha \to \frac{1}{K^2}+0$ and
	that for any $K$, $\nu_1=1$ in the limit of $\alpha \to 1+0$.
\end{proof}
\if0
\color{red}
Since, in the case of $\frac{1}{K^2} < \alpha<1$, it holds that $S_{K, \alpha}' \geq 1$, the integral in \eqref{Lyapunov exponent integral}
does not converge into zero in the limit of $\gamma_{K, \alpha} \to \infty  (\alpha \to 1)$ or $\gamma_{K, \alpha} \to 0 (\alpha \to \frac{1}{(2N+1)^2}, K=2N+1)$
Thus, it holds that
\begin{equation}
\lambda_{K, \alpha} \sim
\left \lbrace
\begin{array}{clcllll}
\frac{1}{\gamma_{K, \alpha}} &\sim& \sqrt{1-\alpha}, &\alpha \to 1-0 , &\mbox{for}& {}\forall K \in \mathbb{N}\backslash \{1\},\\
\gamma_{2N+1, \alpha} &\sim& \sqrt{\alpha - \frac{1}{(2N+1)^2}}, &\alpha \to \frac{1}{(2N+1)^2}+0, &\mbox{for}& K=2N+1, 
\end{array}
\right. 
\end{equation}
Therefore, the critical exponent of Lyapunov exponent $\nu$ at $\alpha=1$ is denoted as $\nu= \frac{1}{2}$ for any $K \in \mathbb{N}\backslash \{1\}$ and
at $\alpha = \frac{1}{(2N+1)^2}$ is denoted as $\nu = \frac{1}{2}$ for any $K= 2N+1$.

In the case of $K=2N$ and $\gamma_{K, \alpha} \to 0 (\alpha \to 0)$, let us evaluate Lyapunov exponent
\begin{eqnarray}
\lambda_{K,\alpha} &=& \frac{\gamma_{K, \alpha}}{\pi} \int_{-\infty}^{\infty}\log \left|S'_{K,\alpha}\right| \frac{dx}{x^2 + \gamma_{K, \alpha}^2},\\
&=& \gamma_{K, \alpha}(I_1 + I_2 + I_3) \nonumber,
\end{eqnarray}
where $I_1, I_2$ and $I_3$ represent as follows.
\begin{eqnarray}
I_1 &\overset{\mathrm{def}}{=}& \frac{1}{\pi} \int_{0}^\pi \log \left|\alpha K^2\right| \frac{d\theta}{\cos^2\theta + \gamma_{K, \alpha}^2\sin^2\theta},\\
I_2 &\overset{\mathrm{def}}{=}& -\frac{2}{\pi} \int_0^\pi \log\left|\sin K\theta\right| \frac{d\theta}{\cos^2\theta + \gamma_{K, \alpha}^2\sin^2\theta},\\
I_3 &\overset{\mathrm{def}}{=}& \frac{2}{\pi} \int_0^\pi \log\left|\sin \theta\right| \frac{d\theta}{\cos^2\theta + \gamma_{K, \alpha}^2\sin^2\theta}.
\end{eqnarray}
Here, According to \cite{Gradshteyn}  the integral $I_3$ is denoted as
\begin{eqnarray}
I_3 = \frac{\log(1+\gamma_{K, \alpha})}{\gamma_{K, \alpha}}
\end{eqnarray}
and $I_1$ is denoted as
\begin{eqnarray}
I_1 = \frac{\log\left|\alpha K^2\right|}{\gamma_{K, \alpha}}.
\end{eqnarray}
Then, in the limit of $\gamma_{K, \alpha} \to 0$, it holds that
\textcolor{red}{
	\begin{equation}
	\begin{array}{lll}
	\displaystyle \lim_{\gamma_{K, \alpha}\to 0} I_1 &=&  \displaystyle\lim_{\gamma_{K, \alpha} \to 0} \left(2\gamma_{K, \alpha}\log \left|\gamma_{K, \alpha}\right|+ 2\gamma_{K, \alpha}\log K\right) =0,\\
	\displaystyle \lim_{\gamma_{K, \alpha} \to 0} I_3 &=& \displaystyle\lim_{\gamma_{K, \alpha} \to 0} \log\left|1+\gamma_{K, \alpha}\right|^{\frac{1}{\gamma_{K, \alpha}}} = 1.
	\end{array} \label{Limit of gamma to 0}
	\end{equation}
}
In addition, for any $\theta \in \left(0, \pi\right)$, since it holds that $\log \left|\sin(K \theta)\right|<0$, then, it holds that
\begin{equation}
I_2 = -\frac{2}{\pi} \int_0^\pi \log\left|\sin K\theta\right| \frac{d\theta}{\cos^2\theta + \gamma_{K, \alpha}^2\sin^2\theta} >0. \label{I_2}
\end{equation}
From equations \eqref{Limit of gamma to 0} and \eqref{I_2}, one has $I_1 + I_2 +I_3 >0$ and it does nor converge to zero in the limit of $\gamma_{K, \alpha} \to 0$.
Thus, in the case of $K=2N$ and in the limit of $\alpha \to 0$, $\lambda_{K, \alpha}$ converges into zero in proportion to $\gamma_{K, \alpha}$.
Therefore, it holds that
\begin{equation}
\lambda_{K, \alpha} \sim \gamma_{K, \alpha} \sim
\sqrt{\alpha} ~~\mbox{for}~~  K=2N,\\
\end{equation}
\fi

From above discussion, it is proven that the derivatives of Lyapunov exponent with respect to parameter $\alpha$ diverge at critical points, which means that
the parameter dependence of Lyapunov exponent diverges at critical points. This result implies that difficulty of 
calculating Lyapunov exponent near the critical points. 

\subsection{Scaling behavior for $K=3,4$ and 5}
The solutions of \eqref{Sinvariant density} which satisfied $0<\gamma_{K, \alpha} <\infty$
in the cases of $K=3, 4$ and 5 are determined uniquely when the Condition A is satisfied as follows.
\begin{equation}
\begin{array}{lll}
\gamma_{3, \alpha} &=& \displaystyle\sqrt{\frac{9\alpha-1}{3- 3\alpha}},\\
\gamma_{4, \alpha} &=& \displaystyle \sqrt{\frac{6\alpha-1+\sqrt{32\alpha^2-8\alpha+1}}{2(1-\alpha)}},\\
\gamma_{5, \alpha} &=& \displaystyle\sqrt{\frac{-5(1-5\alpha)+\sqrt{20(25\alpha^2-6\alpha+1)}}{5(1-\alpha)}}.
\end{array}
\end{equation}
From \eqref{Lyapunov exponent}, one has analytical formulae of Lyapunov exponent as follows.
\begin{equation}
\begin{array}{lll}
\lambda_{3, \alpha} &=&  \displaystyle \log \left|\frac{1}{\alpha}\left( \frac{3(1-\alpha)}{8}\right)^2\left[1+\sqrt{\frac{9\alpha-1}{3-3\alpha}}\right]^4 \right|,
\\
\lambda_{4, \alpha} &=& \displaystyle \log\left|\frac{\alpha(1+\gamma_4)^6}{\gamma_4^2(1+\gamma_4^2)^2}\right|,\\
\lambda_{5,\alpha} &=& \displaystyle \log \left|\frac{25}{256\alpha}\frac{(1-\alpha)^4}{(\sqrt{125\alpha^2-30\alpha+5}+11\alpha-1)^2}|1+\gamma_5|^8\right|,
\end{array} \label{SLyapunov K}
\end{equation}

In the case of $K=3$, equation \eqref{SLyapunov K} converges to zero in the limit of 
$\alpha \to \frac{1}{9}+0$ and $\alpha \to 1-0$, and the derivative of Lyapunov exponent 
$\frac{\partial \lambda_{3, \alpha}}{\partial \alpha}$ diverges at $\alpha = \frac{1}{9}, 1$.
When the parameter $\alpha$ is close to $\frac{1}{9}$, the Lyapunov exponent grows as follows.
\begin{eqnarray}
\lambda_{3, \alpha} &=& -\log\left|1+9\left(\alpha-\frac{1}{9}\right)\right|
+ 2\log\left|1-\frac{9}{8}\left(\alpha-\frac{1}{9}\right)\right|
+4\log\left|1+ \sqrt{\frac{3\left(\alpha- \frac{1}{9}\right)}{1-\alpha}}\right|,\nonumber\\
&\simeq& -9\left(\alpha-\frac{1}{9}\right) -\frac{9}{4}\left(\alpha- \frac{1}{9}\right)
+4 \sqrt{\frac{3\left(\alpha-\frac{1}{9}\right)}{1-\alpha}},\nonumber\\
&\simeq& 6\sqrt{\frac{3}{2}}\sqrt{\alpha-\frac{1}{9}}.
\end{eqnarray}
Then, the critical exponent $\nu_2$ of Lyapunov exponent at $\alpha= \frac{1}{9}$ is $\frac{1}{2}$.
In the case of $\alpha \lesssim 1$, Lyapunov exponent $\lambda_{3, \alpha}$ behaves as follows.
\begin{eqnarray}
\lambda_{3, \alpha} &=& 2\log\left|1+\frac{9}{8}\left(\alpha-1\right)\right|
-\log\left|1+\left(\alpha-1\right)\right|+ 4\log\left|1+ \sqrt{\frac{1-\alpha}{3\left(\alpha-\frac{1}{9}\right)}}\right|,\nonumber\\
&\simeq& -\frac{9}{4}(\alpha-1)+ (\alpha-1)+ 4\sqrt{\frac{1-\alpha}{3\left(\alpha-\frac{1}{9}\right)}},\nonumber\\
&\simeq& 2\sqrt{\frac{3}{2}}\sqrt{1-\alpha}.
\end{eqnarray}
Then, at $\alpha=1$ one has $\nu_1 = \frac{1}{2}$. 

For $K=3$ and for $\frac{1}{9}< \alpha < 1$, the fixed point is only 
\begin{eqnarray}
x^*= 0.
\end{eqnarray}
Then Floquet multiplier $\chi$ at $\alpha = \frac{1}{9}$ and $\alpha=1$ are denoted as follows.
\begin{equation}
\begin{array}{lllll}
\chi_{3, \frac{1}{9}} &=& S_{3, \frac{1}{9}}'(0) &=&1,\\
\chi_{3, 1} &=& S_{3, 1}'(0) &=& 1.
\end{array}
\end{equation}
From these results, we can say that only \textit{Type 1} intermittency occurs for $K=3$.
These results are new phenomena since for the Generalized Boole transformation, one can observe
two different intermittent type, \textit{Type 1} and \textit{Type 3} \cite{Okubo}.

In the case of $K=4$, Lyapunov exponent \eqref{SLyapunov K} is denoted as
\begin{eqnarray}
\lambda_{4, \alpha} &=& \log\left|\frac{\alpha}{\gamma_4^2}\right| + 6\log|1+\gamma_4| -2\log|1+\gamma_4^2|. \label{Lyapunov K=4 divide}
\end{eqnarray}
Let us discuss the scaling behavior of $\lambda_{4, \alpha}$ at $\alpha=0$.
Now the first term of \eqref{Lyapunov K=4 divide} is rewritten as
\begin{eqnarray}
\log\left|\frac{\alpha}{\gamma_4^2}\right|  &=& \log\left|\frac{2\alpha(1-\alpha)}{6\alpha-1+\sqrt{32\alpha^2-8\alpha+1}}\right|,\nonumber\\
&=& \log\left|\frac{(1-\alpha)\left\lbrace \sqrt{32\alpha^2-8\alpha+1}-6\alpha +1\right\rbrace }{-2\alpha+2}\right|,\label{Lyapunov K=4 first term}\\
\therefore \lim_{\alpha \to 0} \log\left|\frac{\alpha}{\gamma_4^2}\right|
&=& \frac{1\times(\sqrt{1}-0+1)}{2} =1.
\end{eqnarray}
Then, 
\begin{eqnarray}
\gamma_4 &\simeq& \sqrt{-1+1+\frac{1}{2}\left(32\alpha^2-8\alpha\right)+6\alpha} \simeq \sqrt{2\alpha}, \nonumber\\
\Longrightarrow \log\left|1+\gamma_4\right| &\simeq& \sqrt{2\alpha} ~\mbox{and}~\log\left|1+\gamma_4^2\right| \simeq 2\alpha, \nonumber\\
\therefore \lambda_{4, \alpha} &\simeq& \sqrt{2\alpha}.
\end{eqnarray}
Therefore, the critical exponent of Lyapunov exponent for $K=4, \alpha=0$ is
\begin{equation}
\nu_{3} = \frac{1}{2}.
\end{equation}

\noindent Next, consider the scaling behavior of $\lambda_{4, \alpha}$ at $\alpha=1-0$.
From \eqref{Lyapunov K=4 first term}, it holds that  near $\alpha= 1$,
\begin{eqnarray}
\log\left|\frac{\alpha}{\gamma_4^2}\right|&=&\log\left|\frac{(1-\alpha)\left\lbrace \sqrt{32\alpha^2-8\alpha+1}-6\alpha +1\right\rbrace }{-2\alpha+2}\right|,\nonumber\\
&=& \log \left|\frac{5\sqrt{32(\alpha-1)^2+56(\alpha-1)+25}-6(\alpha-1)-5}{2}\right|,\nonumber\\
&\simeq& \log\left|\frac{\frac{16}{5}(\alpha-1)^2 -\frac{2}{5}(\alpha-1)}{2}\right|,\nonumber\\
&\simeq& \log\left|\frac{1-\alpha}{5}\right| \simeq \alpha.
\end{eqnarray}
The second and third terms of \eqref{Lyapunov K=4 divide} are denoted as
\begin{eqnarray}
\log\left|\frac{(1+\gamma_4)^6}{(1+\gamma_4^2)^2}\right| &=& 
6\log\left| \sqrt{2(1-\alpha)} + \sqrt{6\alpha-1+\sqrt{32\alpha^2-8\alpha+1}}\right|\nonumber\\
& &-\log 2(1-\alpha) -2\log\left|1+4\alpha+\sqrt{32\alpha^2-8\alpha+1}\right|,\nonumber\\
&\simeq&  \log 5 + 6\log \left(1+\sqrt{\frac{1-\alpha}{5}}\right) - \log(1-\alpha),\nonumber\\
\therefore \lambda_{4, \alpha} &\simeq& \frac{6\sqrt{5}}{5}\sqrt{1-\alpha}, ~\mbox{for}~ \alpha \simeq 1.
\end{eqnarray}
Therefore, the critical exponent of Lyapunov exponent for $K=4, \alpha=1-0$ is
\begin{equation}
\nu_{1} = \frac{1}{2}.
\end{equation}
In the case of $K=4$, fixed points of $S_{4, \alpha}$ are as follows.
\begin{equation}
x_{4*} = \left\lbrace
\begin{array}{cl}
0, & \alpha = 0,\\
\pm \sqrt{\frac{1-6\alpha+\sqrt{40\alpha^2-16\alpha +1}}{1-\alpha}}, &0<\alpha<1,\\
\pm \frac{1}{\sqrt{5}}, & \alpha = 1.
\end{array}
\right.
\end{equation}
In order to obtain the Floquet multiplier at $\alpha =0$, apply scale transformation such that $x = \sqrt{\alpha} y$. Then, one obtains 
following equations as
\begin{equation}
\begin{array}{lll}
y_{n+1} &=& \widehat{S}_{4, \alpha}(y_n) = \frac{\alpha^2y_n^4 -6\alpha y_n^2 +1}{\alpha y_n^2 -y_n},\\
\widehat{S}_{4, 0}(y_n) &=& -\frac{1}{y_n}.
\end{array}
\end{equation}
Then $y = \pm i$ are the fixed points for $\widehat{S}_{4, 0}$ and one has that 
\begin{equation}
\begin{array}{lll}
\widehat{S}_{4, 0}'(y_n) &=& \frac{1}{y_n^2},\\
\widehat{S}_{4, 0}'(\pm i) &=& -1.
\end{array} \label{SFloquet multiplier K=4}
\end{equation}
Thus, the Floquet multiplier at $\alpha = 0$ is -1.

The values of $S_{4, \alpha}'$ at the other fixed points are as follows.
\begin{equation}
\begin{array}{ccl}
\displaystyle \lim_{\alpha \to +0} S_{4, \alpha}'\left(\pm \sqrt{\frac{1-6\alpha+\sqrt{40\alpha^2-16\alpha +1}}{1-\alpha}}\right) &=& 0,\\
S_{4, 1}'\left(\pm \frac{1}{\sqrt{5}}\right) &=& \frac{27}{2},\\
S_{4, 1}'\left(\pm \infty\right) &=& 1.
\end{array}
\label{Sderivative K=4}
\end{equation}
For $K=4$, from \eqref{SFloquet multiplier K=4} and \eqref{Sderivative K=4}, 
one obtains the Floquet multiplier at $\alpha=0$ and $\alpha=1$ as 
\begin{eqnarray}
\begin{array}{lll}
\chi_{4, 0} &=& -1,\\
\chi_{4, 1} &=& 1.
\end{array}
\end{eqnarray}
This result indicates that \textit{Type 3} intermittency occurs at $\alpha=0$ and
\textit{Type 1} intermittency occurs at $\alpha=1$.
These results are consistent with that of Generalized Boole transformation \cite{Okubo}.

In the case of $K=5$, equation \eqref{SLyapunov K} converges to zero in the limit of 
$\alpha \to \frac{1}{25}+0$ and $\alpha \to 1-0$. In addition, it holds that
$\frac{\partial \lambda_{5, \alpha}}{\partial \alpha}\left(\frac{1}{25}\right)=\frac{\partial \lambda_{5, \alpha}}{\partial \alpha}(0) = \infty$.
Lyapunov exponent $\lambda_{5, \alpha}$ for $\frac{1}{25}< \alpha <1$ is divided as
\begin{equation}
\lambda_{5, \alpha} = \log\frac{25}{256} - \log|\alpha| + 4\log|1-\alpha|
-2\log\left|1-11\alpha -\sqrt{125\alpha^2 -30\alpha +5}\right|
+ 8\log|1+ \gamma_5^*|.\label{Lyapunov K=5 divide}
\end{equation}
The forth term of \eqref{Lyapunov K=5 divide} at $\alpha = \frac{1}{25}$ is denoted as
\begin{eqnarray}
\log\left|-1-\frac{11}{25}\right|> \log|1|=0
\end{eqnarray}
and denoted at $\alpha=1$ as
\begin{eqnarray}
\log|-20| > \log|1|=0.
\end{eqnarray}
Then the the forth term is not dominant near $\alpha= \frac{1}{25}$ and $\alpha=1$. 
Thus in considering the scaling behavior, we do not have to care the forth term.

\noindent When the parameter $\alpha$ is close to $\frac{1}{25}$, the Lyapunov exponent grows as follows
\begin{eqnarray}
\lambda_{5,\alpha} \simeq \gamma_5^* &=& \sqrt{\frac{-5(1-5\alpha)+ \sqrt{20(25\alpha^2-6\alpha +1)}}{5(1-\alpha)}},\\
&=& \sqrt{\frac{4\left[\sqrt{1+20 \left\lbrace 25\left(\alpha - \frac{1}{25}\right)^2 -4 \left(\alpha- \frac{1}{25}\right)\right\rbrace }
		+\frac{25}{4}\left(\alpha - \frac{1}{25}\right) -1 \right]}{5(1-\alpha)}},\nonumber\\
&\simeq& \frac{185}{4}\sqrt{\alpha - \frac{1}{25}}.
\end{eqnarray}
Therefore, the critical exponent $\nu_{2}$ is denoted as
\begin{eqnarray}
\nu_{2} = \frac{1}{2}.
\end{eqnarray} 

\noindent For $\alpha \lesssim 1$, Lyapunov exponent grows as follows.
\begin{eqnarray}
\lambda_{5, \alpha} &\simeq& 4\log\left|(1-\alpha) (1+ \gamma_5^*)^2\right|,\nonumber\\
&\simeq& 4\log \left|8+\sqrt{1-\alpha}\right|,\nonumber\\
&\simeq& \frac{\sqrt{1-\alpha}}{2}.
\end{eqnarray}
Therefore, the critical exponent $\nu_{1}$ is denoted as
\begin{eqnarray}
\nu_{1} = \frac{1}{2}.
\end{eqnarray} 
In the case of $K=5$, fixed points of $S_{5, \alpha}$ are as follows.
\begin{equation}
x_* = \left\lbrace
\begin{array}{ll}
0, & 0< \alpha \leq 1,\\
\pm \sqrt{\frac{5}{3}}, & \alpha = \frac{1}{25},\\
\pm \sqrt{\frac{5(1-5\alpha)+ 2\sqrt{5(25\alpha^2-6\alpha +1)}}{5(1-\alpha)}}, & \frac{1}{25} < \alpha <1,\\
\pm \sqrt{\frac{3}{5}}, & \alpha =1.
\end{array}
\right.
\end{equation}
At fixed points $x_* = 0, \pm \sqrt{\frac{5}{3}}$ and $\pm \sqrt{\frac{3}{5}}$ , the derivatives 
$S_{5, \alpha}' (x) = \frac{25\alpha(1+x^2)^4}{(5x^4-10x^2+1)^2}$ are
\begin{equation}
\begin{array}{lcl}
S_{5, \frac{1}{25}}'\left(0\right) &=& 1,\\
S_{5, 1}' (0) &=& 25,\\
S_{5, \frac{1}{25}}'\left(\pm \sqrt{\frac{5}{3}}\right) &=& 16,\\
S_{5, 1}'\left(\pm \sqrt{\frac{3}{5}}\right) &=& 16,\\
S_{5, 1}'\left(\pm \infty\right) &=& 1,
\end{array}  \label{Sderivative K=5}
\end{equation} 
From \eqref{Sderivative K=5}, for $K=5$ and at $\alpha = \frac{1}{25}$ and $\alpha =1$, 
one obtains the Floquet multipliers as
\begin{equation}
\begin{array}{lllll}
\chi_{5, \frac{1}{25}} &=& S_{5, \frac{1}{25}}'(0) &=& 1,\\
\chi_{5, 1} &=& S_{5, 1}'(0) &=& 1.
\end{array}
\end{equation}
Therefore, similar to the case of $K=3$, only \textit{Type 1} intermittency occurs, which 
is different from the case with Generalized Boole transformation and the case of $K=4$.

\end{document}